\pdfoutput=1
\documentclass[runningheads]{llncs}

\usepackage[T1]{fontenc}
\usepackage[utf8]{inputenc}
\usepackage{amsmath,amssymb}
\usepackage{booktabs}
\usepackage{microtype}
\DisableLigatures{encoding = T1, family = tt*}
\usepackage[hidelinks]{hyperref}

\hypersetup{
  hypertexnames=false,
  pdftitle={EFX Allocations for Three Agents and Seven or Eight Chores},
  pdfauthor={Xinkai Zhang},
  pdfsubject={Fair division of indivisible chores; computer-assisted proof},
  pdfkeywords={fair division; EFX; indivisible chores; computer-assisted proof}
}

\DeclareMathOperator{\Bad}{Bad}
\DeclareMathOperator*{\argmin}{arg\,min}
\newcommand{\R}{\mathbb{R}}
\newcommand{\Nset}{N}
\newcommand{\Mset}{M}
\newcommand{\Phiall}{\Phi_{\mathrm{all}}}
\newcommand{\Phires}{\Phi_{\mathrm{res}}}

\begin{document}

\title{EFX Allocations for Three Agents and Seven or Eight Chores%
  \thanks{Machine-assisted throughout; Appendix~\ref{app:tools} is the tool and
  computational resource disclosure. Artifact repository:
  \url{https://github.com/KaixxxZhang/Chores-EFX-n3m7or8}.}}
\titlerunning{EFX for Three Agents and Seven or Eight Chores}

\author{Xinkai Zhang}
\authorrunning{X. Zhang}
\institute{Renmin University of China, Beijing, China\\
  \email{zhangxinkai413@ruc.edu.cn}}

\maketitle

\begin{abstract}
We prove that every nonnegative additive chore instance with three agents and
either seven or eight indivisible chores admits a chores-EFX allocation, in the
zero-tolerant sense that every owned chore, including one of zero cost, is
quantified in the trim. Both proofs are computer-assisted, but their machine
formulas differ. For seven chores, hand-checkable canonicalization reduces
nonexistence to a quantifier-free linear real arithmetic (QF\_LRA) formula over
21 variables with one failure clause per complete allocation. For eight chores,
instances in which two agents share a weakly cheapest chore are lifted from the
seven-chore theorem through the matching insertion lemma of Kobayashi, Mahara,
and Sakamoto, and the remaining pairwise-disjoint-argmin class reduces to a
residual QF\_LRA formula over 24 variables. Z3 5.1.0 and cvc5 1.3.4 report both
formulas unsatisfiable. With the known $m\leq 2n$ theorem, this settles every
three-agent instance with at most eight chores; $m=9$ is the next open
cardinality, and additive chores can fail to admit EFX for every $n\geq4$.

\keywords{Fair division \and Indivisible chores \and EFX \and
  Computer-assisted proof \and SMT solving \and Linear real arithmetic}
\end{abstract}

\section{Introduction}
\label{sec:introduction}

Envy-freeness up to any item (EFX) is a central relaxation of envy-freeness
for indivisible items \cite{CaragiannisEtAl2019}. The direction of the
relaxation depends on whether the items are goods or chores. For goods, one
removes an item from the \emph{envied} bundle. For chores, one removes a chore
from the \emph{envier's own} bundle. This paper concerns only the latter
predicate: agent $i$, after the removal of any chore that $i$ owns, must
weakly prefer the resulting bundle to every allocated bundle.

For additive chores, Kobayashi, Mahara, and Sakamoto proved EFX existence
when the number $m$ of chores is at most twice the number $n$ of agents
\cite{KobayashiMaharaSakamoto2025}. The global chores question has a negative
answer: for superadditive costs by Christoforidis and Santorinaios
\cite{ChristoforidisSantorinaios2024}, and for additive costs by He and Tao,
whose tri-valued counterexamples cover every $n\geq4$ \cite{HeTao2026}. For
unrestricted additive costs the unresolved
exact-existence frontier is therefore concentrated at three agents, where the
smallest number of chores not covered by the $m\leq2n$ theorem is $m=7$. That
case was open before the present result.

\subsection{Our results}
\label{subsec:results}

The first theorem closes the seven-chore case. Costs are arbitrary
nonnegative reals; zero costs and ties are included.

\begin{theorem}
\label{thm:main}
Let three agents have nonnegative additive costs over seven indivisible
chores. Then there is an allocation $X=(X_0,X_1,X_2)$ such that, for every
agent $i$, either $X_i$ is empty or
\[
  c_i(X_i\setminus\{g\})\leq c_i(X_j)
\]
for every owned chore $g\in X_i$ and every agent $j$.
\end{theorem}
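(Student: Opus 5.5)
We will argue by contradiction and reduce nonexistence of an EFX allocation to an unsatisfiable quantifier-free linear real arithmetic formula, which is then discharged by SMT solvers. Fix a hypothetical counterexample: a cost matrix $c\in\R_{\ge0}^{3\times7}$ such that every one of the $3^7$ complete allocations $X$ violates the zero-tolerant EFX condition. For a fixed $X$, the failure is a finite disjunction over triples $(i,g,j)$ with $g\in X_i$ of the strict linear inequality $c_i(X_i\setminus\{g\})>c_i(X_j)$. Since the trim quantifies every owned chore, including zero-cost ones, this disjunction can be simplified to $\exists i,j:\ c_i(X_i)-\min_{g\in X_i}c_i(g)>c_i(X_j)$, which is again a disjunction of linear atoms. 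The conjunction over all $X$ of these failure clauses, together with $c\ge 0$, is a QF\_LRA formula in $21$ variables that holds exactly for counterexamples. Theorem~\ref{thm:main} is therefore equivalent to its unsatisfiability.

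Before handing the formula to a solver, we canonicalize to cut symmetry and make the check tractable. The constraints are invariant under permuting agents, permuting chores, and scaling each agent's cost vector by a positive constant. Scaling lets us normalize each $c_i(\Mset)=1$ whenever $c_i(\Mset)>0$. An agent with $c_i\equiv0$ is handled by hand: give that agent all seven chores, which makes every inequality for that agent read $0\le 0$, and leave the others empty. We can also impose a sorted order on agent~$0$'s costs, $c_0(g_1)\ge\dots\ge c_0(g_7)$, and a lexicographic tie-break across agents. The cheap pieces of the argument go through by hand and are recorded as lemmas:
\begin{itemize}
\item the reduction of EFX failure to the $\min$-form above;
\item the normalization;
\item the zero-agent case;
\item the fact that allocations with an empty bundle need only the non-envy condition of the empty agent's absence, i.e.\ empty bundles are trivially fine.
\end{itemize}
The $3^7=2187$ allocation clauses, each a small disjunction, then form the machine formula.

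The main obstacle is the final step: certifying unsatisfiability, since the formula is large and purely disjunctive. First, we would feed the canonicalized formula directly to Z3 and cvc5 and cross-check both answers. If the solvers time out, we would strengthen the formula with valid hand-proved lemmas. One such lemma is the $m\le 2n$ theorem of Kobayashi--Mahara--Sakamoto applied to sub-configurations. Another is the observation that some agent can be given a single cheapest chore in a candidate structure. Each lemma prunes clauses. If needed, we would also case-split on agent~$0$'s ordering pattern so that each subcase is solved separately. To make the result trustworthy, we would extract and independently check an unsat proof or core where the solvers support it, and publish the formula generator so that the correspondence between the $2187$ clauses and the EFX predicate can be audited.
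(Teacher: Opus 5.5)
Your plan is essentially the paper's proof. A zero row is settled by an explicit allocation; your choice of giving all seven chores to the zero agent and leaving the others empty is valid, and simpler than the paper's singleton construction. Otherwise, independent row normalization and a lexicographic sort of the seven cost columns yield the $2187$-clause QF\_LRA conjunction over $21$ real variables, which Z3 and cvc5 both report unsatisfiable.

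The paper sorts columns in ascending rather than descending order, and reports a descending-order run as corroboration. It finds column sorting essential for termination, and it needs none of your fallback strengthenings at seven chores. Read your empty-bundle remark only as ``the empty owner contributes no literal'': comparisons by the other agents against an empty bundle, whose cost is $0$, must stay in the clauses.
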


The second theorem closes the eight-chore case under the same constraints as
Theorem~\ref{thm:main}.

\begin{theorem}
\label{thm:eight}
The same conclusion holds for three agents and eight indivisible chores: every
nonnegative additive $3$-by-$8$ instance admits an allocation that is EFX in the
sense of Definition~\ref{def:efx}.
\end{theorem}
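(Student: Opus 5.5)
The plan follows the route announced in the abstract: split the instances according to how the agents' cheapest chores overlap, and handle each part separately.

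\textbf{Normalization.} Fix a nonnegative additive $3$-by-$8$ instance. For each agent $i$, choose a weakly cheapest chore $g_i\in\argmin_{g\in\Mset} c_i(g)$. The case split is:
\begin{itemize}
\item[(a)] some two agents have a common weakly cheapest chore, i.e.\ $g_i=g_j$ can be chosen for some $i\neq j$;
\item[(b)] the three argmin sets are pairwise disjoint.
\end{itemize}
Ties are handled by this "can be chosen" formulation. An instance fails (a) only if the full argmin sets are pairwise disjoint, so (a) and (b) together cover every instance.

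\textbf{Case (a): lifting from seven chores.} Suppose $g$ is weakly cheapest for two agents $i$ and $j$.
\begin{itemize}
\item Delete $g$ and apply Theorem~\ref{thm:main} to the remaining seven chores. This gives an EFX allocation $Y$ of $\Mset\setminus\{g\}$.
\item Insert $g$ using the matching insertion lemma of Kobayashi, Mahara, and Sakamoto.
\end{itemize}
The lemma considers the graph of agents who do not envy a given bundle. It shows that $g$ can be added to the bundle of some agent $k$ after a possible reassignment of bundles along a matching, with EFX preserved in the zero-tolerant sense. The reason is that every remaining owned chore of the bundle receiving $g$ costs that owner at least $c(g)$, so removing $g$ is the worst-case removal. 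The condition needed is that $g$ be weakly cheapest for whichever agent ends up holding it. With two agents both minimizing at $g$, the matching can always route $g$ to one of them: at most one agent is "blocked" by not considering $g$ minimal. Verifying this routing, including empty bundles and zero costs, is a short hand argument that mirrors the lemma's hypotheses. I would check it carefully, because the zero-tolerant trim forbids discarding zero-cost chores.

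\textbf{Case (b): residual SMT formula.} Here the argmin sets are pairwise disjoint.
\begin{itemize}
\item Canonicalize by permuting agents and chores so that chore $k$ is agent $k$'s cheapest chore for $k=0,1,2$, and the remaining five chores are ordered by agent $0$'s costs.
\item Normalize each agent's total cost, or use scale invariance to fix one cost.
\end{itemize}
This leaves $3\cdot 8=24$ real cost variables, subject to linear constraints encoding nonnegativity, the disjoint-argmin conditions (strict inequalities, since the sets are disjoint) and the ordering. Nonexistence of an EFX allocation is then the conjunction, over all $3^8$ allocations $X$, of a failure clause. Each clause is the disjunction over $(i,g,j)$ with $g\in X_i$ of the strict linear inequality $c_i(X_i\setminus\{g\})>c_i(X_j)$. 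The result is a QF\_LRA formula, and I would discharge it with Z3 and cvc5, expecting unsat.

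To keep the formula tractable, I would prune allocations implied by symmetry and add the valid cuts from case (a)'s exclusion.

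\textbf{Main obstacle.} The hardest step is making the reduction in case (b) hand-checkably sound: the canonicalization must be without loss of generality, and strict versus weak inequalities must be handled correctly for zero costs. The solver call itself is the easier part. I would first write the canonicalization as an explicit lemma that any instance satisfying (b) is isomorphic to one meeting the encoded constraints, then cross-validate unsat with two independent solvers.
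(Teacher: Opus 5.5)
Your decomposition is the paper's: a shared weak minimum is handled by deleting it, applying Theorem~\ref{thm:main}, and reinserting with Lemma~4.2 of Kobayashi, Mahara, and Sakamoto. Pairwise-disjoint full argmin sets go to one QF\_LRA formula with chosen minima pinned to columns $0,1,2$, normalized rows, ordered free columns, the disjointness cut, and a failure clause for every one of the $3^8$ allocations.

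Several smaller points are fine or need only care:
\begin{itemize}
\item Zero rows need no separate lemma in your split. A zero row has argmin $\Mset_8$, so it falls into (a).
\item In (b) every row total is positive, and that is what licenses normalizing row sums. Do not instead fix an individual cost, since a pinned minimum may be $0$.
\item Ordering the five free columns by agent $0$'s costs alone is weaker than the paper's lexicographic order, but still sound.
\item Encode disjointness without forcing unique minima inside a row. With the pins, $c_{pg}>c_{pp}\lor c_{qg}>c_{qq}$ for all $g$ and $p<q$ is exactly right.
\item Symmetry pruning of allocations buys nothing here. After sorting, no symmetry is left, so a pruned conjunction is merely a weaker formula that may return \texttt{sat}.
\item The solver step is not the easy part. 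The paper reports that the unrestricted formula timed out, and that the residual one took about $850$\,s in Z3 under a specific arithmetic setting and about an hour in cvc5.
\end{itemize}

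\textbf{The real problem is your account of case (a).} You plan to verify by a ``short hand argument'' that the insertion can always route $g$ to one of the two agents minimizing at it, so that removing $g$ is the holder's worst-case trim. That claim is false for an arbitrary EFX allocation $Y$ of the other seven chores, and Theorem~\ref{thm:main} gives you no control over $Y$.

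Here is a counterexample.
\begin{itemize}
\item Agents $p$ and $q$ both have cost $1$ for $x$, $\tfrac12$ for each of $y_1,y_2,y_3,z_1,z_2,z_3$, and $\tfrac1{10}$ for $g$.
\item Agent $r$ has cost $\tfrac15$ for $x$, costs $\tfrac7{10},\tfrac7{10},\tfrac1{10}$ for $y_1,y_2,y_3$ and likewise for $z_1,z_2,z_3$, and cost $1$ for $g$.
\item Giving $\{y_1,y_2,y_3\}$, $\{z_1,z_2,z_3\}$, $\{x\}$ to $p,q,r$ respectively is EFX.
\item Adding $g$ to either triple leaves a bundle that no agent can hold.
\item Adding $g$ to $\{x\}$ and handing $\{x,g\}$ to $p$ or $q$ forces $r$ onto a triple. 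That triple's trimmed cost is $\tfrac75>\tfrac65=c_r(\{x,g\})$, so this fails.
\item The only valid insertion gives $\{x,g\}$ to $r$. For $r$, $g$ is the dearer chore of that bundle.
\end{itemize}

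So case (a) must rest on Lemma~4.2 exactly as stated. Its conclusion is only that the enlarged EFX graph has a perfect matching. You then need the checks the paper makes:
\begin{itemize}
\item The identity is a perfect matching of $G_Y$. The edge condition is the residual form of EFX, and Observation~2.2(i) supplies edges to empty bundles.
\item After relabeling so that $\{p,q\}=[n-1]$, the lemma uses only the cheapness of $g$ for $p$ and $q$. The identical-ordering standing assumption of the source's Section~4 is not used in its proof.
\item Observation~2.3 converts the new matching into an EFX allocation.
\end{itemize}
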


Together with the $m\leq2n$ theorem, which covers $m\leq6$ at $n=3$
\cite{KobayashiMaharaSakamoto2025}, the two results settle every three-agent
instance with at most eight chores.

\subsection{Technique}
\label{subsec:technique}

Both proofs encode the \emph{negation} of existence and read universal existence
off unsatisfiability. What differs is the amount of hand work needed before a
solver becomes useful. For seven chores, three hand lemmas suffice: a zero row is
handled by an explicit allocation, and otherwise the rows are scaled
independently to unit sum and the seven three-entry cost columns are sorted
lexicographically. The clause set has to range over all $2187$ allocations,
because no single bundle-size pattern suffices ---
Appendix~\ref{app:occupancy} records instances whose EFX allocations all have
shape $(3,3,1)$, $(4,2,1)$, or $(5,1,1)$, so an argument confined to the balanced
shape $(3,2,2)$ would leave them uncovered.

For eight chores that canonicalization is not enough---the unrestricted
$6561$-clause formula timed out in several solver configurations---and a borrowed
lemma does the extra work. If two agents share a weakly cheapest chore, deleting
it, applying Theorem~\ref{thm:main}, and using the matching insertion lemma of
Kobayashi, Mahara, and Sakamoto lifts an EFX allocation back to eight chores.
Otherwise the three full argmin sets are pairwise disjoint, and pinning one chosen
row minimum, normalizing the rows, and sorting only the five unpinned columns
yields a residual formula that is unsatisfiable.

\subsection{Status of the evidence}
\label{subsec:evidence-status}

Sections~\ref{sec:preliminaries}--\ref{sec:eight} contain the hand
mathematics---the predicate, the canonicalization lemmas, the two exact
reductions, and the application of the cited insertion lemma---and the author has
checked that material. The machine part is of a different character. For each
theorem the primary formula is transcribed for Z3 5.1.0 \cite{deMouraBjorner2008}
and cvc5 1.3.4 \cite{BarbosaEtAl2022}, both of which return \texttt{unsat}, and
cross-checking encoders decided variants with positive unnormalized rows or the
opposite column order. Appendix~\ref{app:generators} reproduces the construction
of both formulas verbatim, so that the polarity and the literal count can be
checked from this paper alone, and Appendix~\ref{app:tools} discloses how each
component was produced and which checks remain undone.

\subsection{Related work}
\label{subsec:related}

\paragraph{Exact EFX for additive chores.}
Kobayashi, Mahara, and Sakamoto introduced an EFX graph whose perfect matchings
assign abstract bundles to agents without strong envy
\cite{KobayashiMaharaSakamoto2025}, and obtain polynomial-time algorithms in three
regimes: $m\leq2n$, all but one agent having an identical ordering, and three
agents with personalized bi-valued costs. Their edge condition compares the
largest residual $\max_{e}c_i(A_u\setminus\{e\})$ with the cheapest bundle, so
their guarantee is stated for the predicate of Remark~\ref{rem:residual}, the one
proved here. At $n=3$ their first theorem ends at $m=6$, and the seventh chore is
not covered by adding another chore to their matching construction. For the
eight-chore proof we do use their Lemma~4.2
\cite{KobayashiMaharaSakamotoArXiv}, restated as Lemma~\ref{lem:kms-insertion}
below, together with their Observation~2.3, which turns a perfect matching into an
EFX allocation. Their second regime generalizes the case of $n$ identically
ordered agents, settled earlier by Li, Li, and Wu \cite{LiLiWu2022}, who upgrade
the top-trading envy-cycle elimination of Bhaskar, Sricharan, and Vaish
\cite{BhaskarSricharanVaish2021} from EF1 to EFX by a placement rule that needs
the common order. Yin and Mehta reach EFX for three agents when two are additive,
share an ordering, and find every non-singleton costlier than every singleton
\cite{YinMehta2022}; no such hypothesis is imposed here.

Garg, Murhekar, and Qin give another proof of exact EFX for $m\leq2n$
\cite{GargMurhekarQin2025}, starting from bundles of size at most two and
repairing violations by chore swaps. At $n=3$, $m=7$ the balanced occupancy is
$(3,2,2)$, so the size-two invariant does not directly apply, and their
unrestricted result is a constant-factor approximation rather than exact EFX.
The best constant is now $2$, by Garg and Murhekar \cite{GargMurhekar2026};
for three agents $2$-EFX holds for all subadditive costs, and exact tEFX under a
further ratio-bounded hypothesis \cite{AfshinmehrEtAl2024}.

He and Tao show that exact EFX need not exist for additive chores once
$n\geq4$, even with three positive cost levels \cite{HeTao2026}. Their result
makes the restriction to three agents substantive rather than merely
computational, and prevents any reading of our theorems as the first case of a
cardinality-independent theorem for all $n$.

\paragraph{Goods and computer-assisted existence.}
Chaudhury, Garg, and Mehlhorn prove EFX existence for three additive \emph{goods}
agents \cite{ChaudhuryGargMehlhorn2024}. Their champion-graph and trimming
operations trim the envied bundle, so they are context here rather than a lemma or
a proof template; the same caveat applies to the two goods results below, whose
predicate cannot be interchanged with the one proved here.

Alkassar, Fouz, and Mehlhorn prove that four additive agents and at most nine
goods admit complete EFX in the zero-tolerant goods sense
\cite{AlkassarFouzMehlhorn2026}, also by hand reductions feeding QF\_LRA
unsatisfiability checks over a continuous domain. Their certified cover branches
the valuation space into $36{,}152$ canonical regions and closes each against its
own shortlist of candidate allocations, so soundness rests on the branching being
exhaustive. Each theorem here is instead decided by one formula over the whole
canonical domain, with one clause per complete allocation.

Akrami, Mayorov, Mehlhorn, Srinivas, and Weidenbach use SAT solving both to
construct counterexamples to EFX existence for monotone and submodular goods
valuations and to prove that every three-agent, seven-good monotone instance admits
an EFX allocation \cite{AkramiEtAl2026}. Those valuations are not additive, so
neither direction bears on the theorems here, and the coincidence of $n=3$, $m=7$
is not evidence of a shared mechanism. Their positive result is decided from
Boolean comparisons between subsets, an encoding formalized in Lean; the formulas
here range over unbounded reals and are not formally verified, which is why
Section~\ref{sec:soundness} substitutes cross-checking encoders and mutation
checks.

Section~10 of that paper is the closest analogue of the encoding schema here: a
QF\_LRA assertion of nonexistence, one failure clause per complete allocation,
existence read off \texttt{unsat} from Z3. Monotone valuations have no compact
description, so their variables are indexed by agent--set pairs, $3\cdot2^{7}$ of
them at $m=7$ against the $21$ here, and their only symmetry reduction fixes one
agent's order on singletons. They report $m=6$ in about $25$ seconds and no answer
at $m=7$ within $40$ hours. Additivity accounts for most of that gap, so this is
not a measurement of Section~\ref{sec:seven}; it does show that the schema alone
does not reach the cardinalities decided here.

\paragraph{Organization.}
Section~\ref{sec:preliminaries} fixes the zero-tolerant predicate;
Sections~\ref{sec:seven} and~\ref{sec:eight} give the two reductions and the two
decisions; Section~\ref{sec:soundness} summarizes the checks aimed at the
characteristic failure mode of an UNSAT-based argument; and
Section~\ref{sec:discussion} states the limitations. Appendices
\ref{app:proofs}--\ref{app:versions} contain the omitted proofs, the
corroborating runs, the full soundness protocol, an occupancy observation used in
neither proof, reproduction instructions, the verbatim formula construction, the
disclosure of tool use and verification status, and a note on the cited versions.

\section{Preliminaries}
\label{sec:preliminaries}

Let the agents be $\Nset=\{0,1,2\}$, and let $\Mset$ be a finite set of $m$
chores. Agent $i$ has singleton costs $c_{ig}\in\R_{\geq0}$, and costs are
additive: $c_i(S)=\sum_{g\in S}c_{ig}$ for $S\subseteq\Mset$, with
$c_i(\varnothing)=0$. An allocation $X=(X_0,X_1,X_2)$ is a partition of
$\Mset$, so every chore is allocated exactly once and bundles may be empty.

\begin{definition}[Chores EFX]
\label{def:efx}
An allocation $X$ is EFX if, for every $i\in\Nset$,
\begin{equation}
  X_i=\varnothing
  \quad\text{or}\quad
  \bigl(\,
    c_i(X_i\setminus\{g\})\leq c_i(X_j)
    \text{ for every }g\in X_i\text{ and }j\in\Nset
  \,\bigr).
  \label{eq:efx}
\end{equation}
\end{definition}

The empty-bundle alternative in \eqref{eq:efx} is equivalently the vacuous value
of the conjunction over $g\in X_i$. The quantifier includes chores with
$c_{ig}=0$, and both bundle costs in each comparison are evaluated using row
$i$, the possible envier's costs.

\begin{remark}[Residual form; no strengthening at zero costs]
\label{rem:residual}
For nonempty $X_i$, additivity gives
$\max_{g\in X_i}c_i(X_i\setminus\{g\}) = c_i(X_i)-\min_{g\in X_i}c_{ig}$,
so \eqref{eq:efx} is equivalent to
\begin{equation}
  \tau_i(X):=
  \begin{cases}
    0, & X_i=\varnothing,\\
    c_i(X_i)-\min_{g\in X_i}c_{ig}, & X_i\neq\varnothing,
  \end{cases}
  \qquad
  \tau_i(X)\leq c_i(X_j)
  \ \text{ for all }i,j\in\Nset.
  \label{eq:residual}
\end{equation}
This is the residual formulation used for additive chores, and it already
quantifies zero-cost owned chores: when $\min_{g\in X_i}c_{ig}=0$ the test reduces
to $c_i(X_i)\leq c_i(X_j)$. Retaining zero-cost chores in the trim is therefore
not a strengthening of the usual predicate, so the comparisons with the literature
above compare the same notion.
\end{remark}

After labeling $\Mset=\{0,\ldots,m-1\}$, fix a complete assignment
$a=(a_0,\ldots,a_{m-1})\in\Nset^m$, and write
$X_k(a)=\{g\in\Mset:a_g=k\}$. Negating \eqref{eq:efx} requires a strict
inequality. Hence $a$ fails EFX exactly when some $i$, some owned $g$, and
some $j$ satisfy
\begin{equation}
  \sum_{h\in X_i(a)\setminus\{g\}}c_{ih}
  >
  \sum_{h\in X_j(a)}c_{ih}.
  \label{eq:strict-failure}
\end{equation}
The comparison $j=i$ can never witness failure under nonnegativity, because
removing a nonnegative chore cannot increase the own-bundle cost; we retain all
$j$ in the definition and omit $j=i$ only in the machine clause.

\begin{definition}[Counterexample]
\label{def:counterexample}
A nonnegative additive $3$-by-$m$ cost matrix is a \emph{counterexample} when
\emph{no} allocation is EFX, that is, when every one of the $3^m$ complete
assignments fails \eqref{eq:efx}.
\end{definition}

The universal quantifier here is the whole difficulty of the encoding: the
assertion that a single allocation fails would be satisfied by almost every
matrix.

\section{Seven Chores}
\label{sec:seven}

Throughout this section $m=7$ and $\Mset=\{0,\ldots,6\}$; write
$C=(c_{ig})_{i\in\Nset,g\in\Mset}$ for the $3$-by-$7$ cost matrix.

\subsection{Canonicalization}
\label{subsec:canonicalization}

\begin{lemma}[Zero-row construction]
\label{lem:zero-row}
If $c_{zg}=0$ for every $g\in\Mset$ and some agent $z$, then the instance has
an EFX allocation.
\end{lemma}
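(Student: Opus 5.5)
We give agent $z$ all seven chores and leave the other two agents with empty bundles. Let $X_z=\Mset$ and $X_k=\varnothing$ for both $k\neq z$. We check \eqref{eq:efx}, or equivalently the residual form \eqref{eq:residual}, separately for each agent.

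First take an agent $k\neq z$. Its bundle is empty, so the first alternative of \eqref{eq:efx} holds and there is nothing to verify. Equivalently, $\tau_k(X)=0\leq c_k(X_j)$ for every $j$, by nonnegativity of row $k$.

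Next take agent $z$. Its row is identically zero, so for every owned $g\in X_z$ and every $j$,
\[
  c_z(X_z\setminus\{g\})=0\leq 0=c_z(X_j).
\]
In particular this holds for $j\neq z$, where $X_j=\varnothing$. The inequality also holds for zero-cost owned chores, so the zero-tolerant quantifier over all $g\in X_z$ is satisfied. Hence $X$ is EFX.

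There is no real obstacle here. The only point to watch is that the comparison for agent $z$ uses row $z$ on both sides, as Definition~\ref{def:efx} requires. The argument also does not use $m=7$, so the lemma holds for any number of chores. That generality is useful, because the eight-chore reduction in Section~\ref{sec:eight} needs the same case.
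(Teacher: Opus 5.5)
Your proof is correct, but it uses a different allocation from the paper's. You give all chores to $z$ and handle the two other agents through the empty-bundle alternative of Definition~\ref{def:efx}. The paper instead gives each agent other than $z$ a single chore and gives $z$ the remaining five. It handles those two agents by noting that trimming a singleton leaves cost zero, which is at most the nonnegative cost of every bundle. The treatment of $z$ is the same in both.

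Your version is more economical and, as you note, uniform in $m$; it even works for $m\leq1$. So it covers Lemma~\ref{lem:zero-row-eight} verbatim. The paper's version gives every agent a chore, so it does not rely on the convention that empty bundles are allowed and vacuously EFX. It would therefore survive a variant of the definition that required nonempty bundles, at the cost of needing $m\geq2$.

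Under the paper's own definitions, which explicitly permit empty bundles, both constructions are valid. Your allocation would also work in the lifting step of Lemma~\ref{lem:shared-minimum}, where empty bundles are handled by Observation~2.2(i).
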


\begin{proof}
Give one chore to each agent other than $z$, and give the remaining five
chores to $z$. Each non-$z$ agent owns a singleton, so removing its only chore
leaves cost zero, which is at most the nonnegative cost of every bundle. Agent
$z$ evaluates every bundle and every trimmed bundle at zero. Thus
\eqref{eq:efx} holds for all three agents. \qed
\end{proof}

\begin{lemma}[Independent row scaling]
\label{lem:row-scaling}
For positive scalars $\lambda_0,\lambda_1,\lambda_2$, replacing $c_{ig}$ by
$\lambda_i c_{ig}$ preserves the set of EFX allocations. Consequently, if
every row has positive total cost, one may assume
\begin{equation}
  c_{ig}\geq0,
  \qquad
  \sum_{g\in\Mset}c_{ig}=1
  \quad\text{for every }i\in\Nset.
  \label{eq:normalization}
\end{equation}
\end{lemma}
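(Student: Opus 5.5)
The plan is to verify directly that each inequality in Definition~\ref{def:efx} is invariant under independent positive row scaling, and then to choose the scalars that normalize the rows. Fix positive $\lambda_0,\lambda_1,\lambda_2$ and write $c'_{ig}=\lambda_i c_{ig}$. The new costs are still nonnegative and additive, and $c'_i(S)=\lambda_i c_i(S)$ for every $S\subseteq\Mset$.

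First I will show that the two instances have the same EFX allocations. Take any allocation $X$. The empty-bundle alternative in \eqref{eq:efx} does not involve costs, so it holds for $c$ exactly when it holds for $c'$. Each remaining condition is $c_i(X_i\setminus\{g\})\leq c_i(X_j)$ for some $i$, some $g\in X_i$, and some $j$. Both sides are evaluated with row $i$ only; this is the point stressed after Definition~\ref{def:efx}, namely that the envier's row is used for both bundles. Multiplying both sides by $\lambda_i>0$ therefore turns the condition into $c'_i(X_i\setminus\{g\})\leq c'_i(X_j)$, and dividing by $\lambda_i$ reverses the step. So each inequality holds for $c$ if and only if it holds for $c'$. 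Hence $X$ is EFX for $c$ if and only if it is EFX for $c'$, and the set of EFX allocations is unchanged.

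For the consequence, suppose every row has positive total $s_i=\sum_{g}c_{ig}>0$. Set $\lambda_i=1/s_i$. The scaled rows are nonnegative and sum to $1$, which is exactly \eqref{eq:normalization}. By the first part, an EFX allocation exists for the original matrix exactly when one exists for the normalized matrix. So it suffices to treat matrices satisfying \eqref{eq:normalization}. Matrices with a zero row are excluded from this reduction, but they are already handled by Lemma~\ref{lem:zero-row}.

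There is no real obstacle here. The only point needing care is that the scalars may differ across agents, and this is harmless only because no comparison in \eqref{eq:efx} mixes two agents' rows. I will state that explicitly rather than leave it implicit.
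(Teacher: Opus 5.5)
Your proposal is correct and follows the paper's proof: every comparison in \eqref{eq:efx} uses only row $i$, so scaling that row by $\lambda_i>0$ preserves it, and then $\lambda_i$ is taken to be the reciprocal of the positive row total. Your extra remarks add explicitness but no new step: the empty-bundle alternative involves no costs, nonnegativity survives the scaling, and zero rows are deferred to Lemma~\ref{lem:zero-row}.
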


\begin{proof}
Every comparison in \eqref{eq:efx} uses two sums from the same row $i$.
Scaling that row by $\lambda_i>0$ multiplies both sides by the same positive
number and preserves the comparison. If the row total is positive, choose
$\lambda_i=(\sum_g c_{ig})^{-1}$. \qed
\end{proof}

For vectors $x,y\in\R^3$, define weak lexicographic order by
\begin{equation}
  x\leq_{\mathrm{lex}}y
  \iff
  x_0<y_0
  \lor (x_0=y_0\land x_1<y_1)
  \lor (x_0=y_0\land x_1=y_1\land x_2<y_2)
  \lor x=y.
  \label{eq:lex}
\end{equation}
The final disjunct matters: equal columns are allowed.

\begin{lemma}[Column canonicalization]
\label{lem:column-sort}
Let $K_g=(c_{0g},c_{1g},c_{2g})$ be the column of chore $g$. Relabeling the
chores preserves the existence and nonexistence of EFX allocations. Hence one
may assume
\begin{equation}
  K_0\leq_{\mathrm{lex}}K_1\leq_{\mathrm{lex}}\cdots
  \leq_{\mathrm{lex}}K_6.
  \label{eq:column-sort}
\end{equation}
\end{lemma}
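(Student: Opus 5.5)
The plan is to show that a relabeling of the chores is a symmetry of the EFX predicate. Then any instance can be replaced by a column-sorted one without changing whether an EFX allocation exists. Let $\pi$ be a permutation of $\Mset$, and define the relabeled instance $C'$ by $c'_{ig}=c_{i\pi(g)}$, so that column $g$ of $C'$ is $K_{\pi(g)}$. Given an allocation $X$ of $C$, define $X'_k=\pi^{-1}(X_k)$. Since $\pi$ is a bijection, $X'$ is again a partition of $\Mset$, and the bundle sizes and emptiness pattern are unchanged.

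Next I check \eqref{eq:efx} term by term. For any $S\subseteq\Mset$, additivity gives $c'_i(\pi^{-1}(S))=\sum_{h\in S}c_{ih}=c_i(S)$. Owned chores correspond as $g\in X_i$ iff $\pi^{-1}(g)\in X'_i$, and
\[
\pi^{-1}(X_i\setminus\{g\})=X'_i\setminus\{\pi^{-1}(g)\}.
\]
So every inequality $c_i(X_i\setminus\{g\})\le c_i(X_j)$ becomes the identical inequality $c'_i(X'_i\setminus\{\pi^{-1}(g)\})\le c'_i(X'_j)$. The empty-bundle alternative is preserved as well. Hence $X$ is EFX for $C$ iff $X'$ is EFX for $C'$. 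The map $X\mapsto X'$ is a bijection between allocations, with inverse induced by $\pi^{-1}$. It therefore preserves the set of EFX allocations up to relabeling, and in particular preserves existence and nonexistence.

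It remains to show that some $\pi$ yields \eqref{eq:column-sort}. The relation $\le_{\mathrm{lex}}$ in \eqref{eq:lex} is a total preorder on $\R^3$; it is in fact a total order, since the last disjunct only adds reflexivity. I would verify totality briefly: for $x\ne y$, let $t$ be the first coordinate where they differ; then exactly one of $x_t<y_t$ or $y_t<x_t$ holds, which gives the corresponding disjunct. Transitivity follows by the standard first-differing-index argument. Sorting the finite multiset of columns $K_0,\dots,K_6$ under this order gives a permutation $\pi$ with $K_{\pi(0)}\le_{\mathrm{lex}}\cdots\le_{\mathrm{lex}}K_{\pi(6)}$. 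Equal columns are allowed in any order, which is exactly why the final disjunct of \eqref{eq:lex} is needed.

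Finally, relabeling also preserves the hypotheses established by the previous lemmas. Row sums are invariant under column permutation, so \eqref{eq:normalization} survives, and nonnegativity survives trivially. The three canonicalizations can therefore be combined. No real obstacle is expected; the only care point is the totality of the lex order in the presence of ties.
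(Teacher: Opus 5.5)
Your proposal is correct and follows the paper's proof. Like the paper, you apply the chore permutation to all three bundles, observe that every sum in \eqref{eq:efx} is unchanged, and sort the columns under \eqref{eq:lex} with ties allowed; you additionally spell out the identity $c'_i(\pi^{-1}(S))=c_i(S)$, the totality and transitivity of $\leq_{\mathrm{lex}}$, and the invariance of the row normalization, which the paper leaves implicit.
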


\begin{proof}
A permutation of chores maps every labeled allocation bijectively to the
allocation obtained by applying the same permutation to all three bundles.
Under this map, all sums in \eqref{eq:efx} are unchanged after relabeling.
Every finite list of columns can be sorted by the total preorder in
\eqref{eq:lex}, including when columns are equal. \qed
\end{proof}

The reduction now has two branches. Lemma~\ref{lem:zero-row} proves the theorem
when any row is identically zero; otherwise all three row totals are positive, and
Lemmas~\ref{lem:row-scaling}~and~\ref{lem:column-sort} map the instance into the
canonical domain \eqref{eq:normalization}--\eqref{eq:column-sort}.

\subsection{The exact counterexample formula}
\label{subsec:encoding}

All variables below are real variables; there is no finite or discretized cost
domain. The normalization \eqref{eq:normalization} confines a model to a
product of three simplices, but by Lemma~\ref{lem:row-scaling} this
restriction loses no counterexample, whatever the magnitude of its costs. For
each $a\in\Nset^7$, define
\begin{equation}
  \Bad_a(C)=
  \bigvee_{\substack{i\in\Nset,\ g\in X_i(a)\\
                     j\in\Nset\setminus\{i\}}}
  \left(
    \sum_{h\in X_i(a)}c_{ih}-c_{ig}
    >
    \sum_{h\in X_j(a)}c_{ih}
  \right).
  \label{eq:bad-clause}
\end{equation}
An empty $X_i(a)$ contributes no literal for that $i$. Every owned chore
contributes, including a zero-cost chore. By \eqref{eq:strict-failure}, and
because the omitted $j=i$ comparisons are automatic under nonnegativity,
$\Bad_a(C)$ is exactly the statement that allocation $a$ is not EFX. The
complete formula is
\begin{align}
  \Phiall(C)={}&
  \left(\bigwedge_{\substack{i\in\Nset\\g\in\Mset}} c_{ig}\geq0\right)
  \land
  \left(\bigwedge_{i\in\Nset}\sum_{g\in\Mset}c_{ig}=1\right)
  \land
  \left(\bigwedge_{g=0}^{5}K_g\leq_{\mathrm{lex}}K_{g+1}\right)
  \notag\\
  &\land
  \left(\bigwedge_{a\in\Nset^7}\Bad_a(C)\right).
  \label{eq:phi-all}
\end{align}
The last connective is a conjunction: one badness clause is asserted for
\emph{every} complete allocation. A disjunction of these clauses, or one
existential allocation vector, would express only that some allocation is bad
and would have the wrong polarity.

There are $3^7=2187$ distinct assignments $a$, hence $2187$ badness clauses. Each
clause has exactly $2\sum_{i\in\Nset}|X_i(a)|=14$ literal positions, counted as
$(i,g,j)$ triples: every owned chore is paired with the two other agents, and
$\sum_i|X_i(a)|=7$ whether or not some bundle is empty. With the matrix
constraints, the primary formula has $2217$ top-level assertions, built by a
compact generator that enumerates $\{0,1,2\}^7$ rather than by sampling
allocations.

\begin{proposition}[Exact reduction]
\label{prop:exact-reduction}
The formula $\Phiall$ is satisfiable if and only if there is a counterexample
in the sense of Definition~\ref{def:counterexample} whose three row totals are
positive.
\end{proposition}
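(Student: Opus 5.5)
The proof has two directions, and both rest on one fact: for every $3$-by-$7$ matrix $C$ with nonnegative entries and every $a\in\Nset^7$, $\Bad_a(C)$ holds exactly when allocation $X(a)$ is not EFX in the sense of Definition~\ref{def:efx}. I would establish this fact first.

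\emph{The badness fact.} Negating \eqref{eq:efx} for a fixed $i$ requires $X_i(a)\neq\varnothing$ together with a pair $(g,j)$ with $g\in X_i(a)$ satisfying \eqref{eq:strict-failure}. If $X_i(a)=\varnothing$, agent $i$ contributes no literal to \eqref{eq:bad-clause}, which matches the vacuous truth of \eqref{eq:efx} for that agent. The left side of \eqref{eq:strict-failure} equals $\sum_{h\in X_i(a)}c_{ih}-c_{ig}$ by additivity, so each literal in \eqref{eq:bad-clause} is the corresponding instance of \eqref{eq:strict-failure}. Only the excluded case $j=i$ remains. Under $c_{ig}\geq0$, the inequality $c_i(X_i)-c_{ig}>c_i(X_i)$ is false, so dropping $j=i$ loses no witness. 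Hence $\Bad_a(C)$ is equivalent to ``$X(a)$ fails EFX'' whenever all entries are nonnegative.

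\emph{Forward direction.} Let $C$ satisfy $\Phiall$. Its first conjunct gives nonnegativity, so $C$ is a valid instance. Its second conjunct gives each row total equal to $1>0$. Its last conjunct, together with the badness fact, gives that every one of the $3^7$ assignments fails EFX. The map $a\mapsto X(a)$ is a bijection onto partitions of $\Mset$ into labeled bundles, so no allocation is EFX. Thus $C$ is a counterexample in the sense of Definition~\ref{def:counterexample} with positive row totals. The lex constraints play no role in this direction.

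\emph{Reverse direction.} Let $C$ be a counterexample with positive row totals. Apply Lemma~\ref{lem:row-scaling} with $\lambda_i=(\sum_g c_{ig})^{-1}$; this preserves the set of EFX allocations, which is empty, and yields \eqref{eq:normalization}. Next permute the chores to sort the columns as in Lemma~\ref{lem:column-sort}. The permutation maps allocations bijectively and preserves EFX status, so the result is still a counterexample. It remains nonnegative and keeps its row sums, since a permutation only reorders entries within each row. Because $\leq_{\mathrm{lex}}$ in \eqref{eq:lex} is a total preorder whose final disjunct permits equal columns, the sorted matrix satisfies each adjacent constraint $K_g\leq_{\mathrm{lex}}K_{g+1}$. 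By the badness fact, every $\Bad_a$ holds. The matrix therefore satisfies every conjunct of $\Phiall$.

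The main obstacle is not depth but polarity and exactness. One must check that the strict $>$ in the literals is the correct negation of the weak inequality in \eqref{eq:efx}. One must also check that both the omission of $j=i$ and the dropped empty-bundle literals are justified exactly, and the latter holds only under nonnegativity, which is why the forward direction needs the first conjunct. A further point to verify is that the lex encoding used in the formula coincides with \eqref{eq:lex}, so that sorting produces a genuine model rather than one that fails on tied columns.
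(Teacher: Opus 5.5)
Your proof is correct and follows the same route as the paper's (Appendix~\ref{app:proofs}). Row scaling and column sorting turn a positive-row counterexample into a model, and conversely nonnegativity, the unit row sums, and the semantics of \eqref{eq:bad-clause} make any model a counterexample; you spell out the clause-semantics step that the paper cites from Section~\ref{subsec:encoding}. One small slip is in your closing paragraph: it is the omission of the $j=i$ literals, not the absence of literals for an empty $X_i$, that relies on nonnegativity, since an empty bundle satisfies Definition~\ref{def:efx} vacuously whatever the signs---exactly as your own badness paragraph says.
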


The proof is a direct application of Lemmas~\ref{lem:row-scaling}
and~\ref{lem:column-sort} in one direction and of \eqref{eq:bad-clause} in the
other; it is given in Appendix~\ref{app:proofs}.

\subsection{The machine decision}
\label{subsec:machine}

\begin{proposition}[Computer-assisted QF\_LRA decision]
\label{prop:machine-unsat}
The formula $\Phiall$ in \eqref{eq:phi-all} is unsatisfiable.
\end{proposition}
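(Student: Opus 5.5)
The plan is to establish Proposition~\ref{prop:machine-unsat} by direct machine decision, with the burden of proof placed on the faithfulness of the transcription rather than on any hand argument. First, I will generate $\Phiall$ mechanically from \eqref{eq:phi-all}. The generator declares the $21$ real variables $c_{ig}$, asserts the $21$ nonnegativity constraints and the $3$ row-sum equalities, and encodes the $6$ lexicographic constraints $K_g\leq_{\mathrm{lex}}K_{g+1}$ exactly as the four-way disjunction \eqref{eq:lex}, keeping the equality disjunct. It then enumerates all of $\{0,1,2\}^7$ and emits, for each $a$, one top-level assertion that is the disjunction \eqref{eq:bad-clause} of its $14$ strict linear literals. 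The generator will check the counts before submission: $2217$ assertions in total, $2187$ badness clauses, and $14$ literals per clause. These counts guard against a missing allocation or a dropped literal.

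Second, I will submit the SMT-LIB file under logic QF\_LRA to two independent decision procedures, Z3 and cvc5. Both are complete for linear real arithmetic, so an \texttt{unsat} answer is a proof modulo solver correctness. Since both solvers run on the same file, the encoding is a common point of failure. For that reason I will cross-check with independently written encoders, for instance one that drops the normalization in favor of strictly positive row totals, and one that sorts columns in the opposite lexicographic direction. By Lemmas~\ref{lem:row-scaling} and~\ref{lem:column-sort}, each of these formulas is equisatisfiable with $\Phiall$, so each must also return \texttt{unsat}.

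Third, I will run polarity and mutation sanity checks. Deleting a single badness clause, or replacing the conjunction over $a$ by a disjunction, should yield \texttt{sat}. Any model a solver returns can then be checked by exhaustively verifying in exact rational arithmetic that the only non-bad allocations are the removed ones. These checks confirm that \texttt{unsat} reflects the mathematics and not a vacuous constraint set, for example a typo in the lex constraints that makes the domain empty. To exclude that specific failure, I will also check that the matrix constraints alone are satisfiable.

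The main obstacle I expect is solver performance. The formula has thousands of disjunctive clauses over a continuous domain, so DPLL(T) search may blow up. If a direct run stalls, I would first try to strengthen the symmetry breaking, or split the domain by the bundle-size occupancy of candidate EFX allocations into several subformulas decided separately, documenting that the split is exhaustive. The column canonicalization of Lemma~\ref{lem:column-sort} is the key lever I expect to make the unsplit formula tractable at $m=7$.
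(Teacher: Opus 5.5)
Your plan matches the paper's proof: build all $2187$ clauses of $\Phiall$ exhaustively, obtain \texttt{unsat} from both Z3 and cvc5, and corroborate with positive-row and reversed-column-order encoders, which the paper also runs. One of your sanity checks is wrong as stated, though it carries no weight in the proof: deleting a single $\Bad_a$ need not give \texttt{sat}, because a model would be a canonical matrix whose \emph{only} EFX allocation is $a$. For instance, when $a$ gives all seven chores to one agent, $\Bad_a$ already follows from that agent's unit row sum, so the formula stays \texttt{unsat}. The paper's polarity check instead pins a stored matrix with exactly six EFX allocations and shows that the formula becomes \texttt{sat} only when all six of those clauses are dropped together.
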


\begin{proof}[computer-assisted]
The scripts \texttt{s5\_all\_z3.py} and \texttt{s5\_all\_cvc5.py} construct
\eqref{eq:phi-all} in QF\_LRA through the Z3 and cvc5 interfaces, respectively,
and Appendix~\ref{app:generators} reproduces the Z3 construction verbatim. The
recorded reproductions used a 900-second
per-solver limit on a 14-core Apple M4 Pro. Both solvers returned
\texttt{unsat}; the first two rows of Table~\ref{tab:runs} give the recorded
times. Neither returned \texttt{unknown}. \qed
\end{proof}

\begin{table}[t]
  \centering
  \caption{Recorded primary decisions; build and check are the scripts' own
  timings. The two checkers per theorem give solver diversity, not encoding
  independence.}
  \label{tab:runs}
  \small
  \begin{tabular}{@{}llllrr@{}}
    \toprule
    Formula & Theorem & Solver & Result & Build (s) & Check (s) \\
    \midrule
    $\Phiall$ & \ref{thm:main}  & Z3 5.1.0                      & \texttt{unsat} & 0.959 & 517.911 \\
    $\Phiall$ & \ref{thm:main}  & cvc5 1.3.4                    & \texttt{unsat} & 0.114 & 483.318 \\
    $\Phires$ & \ref{thm:eight} & Z3 5.1.0 (arith.\ 2, phase 3) & \texttt{unsat} & 4.625 & 852.824 \\
    $\Phires$ & \ref{thm:eight} & cvc5 1.3.4 (proof checking)   & \texttt{unsat} & 0.521 & 3729.851 \\
    \bottomrule
  \end{tabular}
\end{table}

Both cvc5 runs set \texttt{produce-proofs=true} and \texttt{check-proofs=true}.
This is an internal cvc5 proof check, not an exported certificate checked by an
external proof kernel. A third transcription, the cross-checking encoder, was
therefore written directly from \eqref{eq:efx} and \eqref{eq:strict-failure}; it
retained the harmless $j=i$ literals and checked its generated clauses against a
separate literal implementation before each proof-bearing run. Two broader or
differently canonicalized variants also returned \texttt{unsat}
(Appendix~\ref{app:corroborating}), and one of them, \texttt{posrows}, is
logically strong enough on its own to entail
Proposition~\ref{prop:machine-unsat}. We nevertheless present the normalized
decision as the primary one, because its two generators are the primary scripts
of Appendix~\ref{app:reproducing}.

\begin{proof}[of Theorem~\ref{thm:main}]
If some row is zero, apply Lemma~\ref{lem:zero-row}. Otherwise every row has
positive total. A counterexample would then give, by
Proposition~\ref{prop:exact-reduction}, a model of $\Phiall$. This contradicts
Proposition~\ref{prop:machine-unsat}. Therefore every nonnegative additive
$3$-by-$7$ instance has an EFX allocation. \qed
\end{proof}

\section{Eight Chores}
\label{sec:eight}

The proof of Theorem~\ref{thm:eight} does not identify the seven- and
eight-chore machine formulas. It first removes a class that can be lifted from
Theorem~\ref{thm:main}, and only then asks a solver to decide the remaining
class. In this section $\Mset_8=\{0,\ldots,7\}$.

\subsection{Zero rows and shared-minimum lifting}
\label{subsec:shared-minimum}

\begin{lemma}[Eight-chore zero-row construction]
\label{lem:zero-row-eight}
If $c_{zg}=0$ for every $g\in\Mset_8$ and some agent $z$, then the eight-chore
instance has an EFX allocation.
\end{lemma}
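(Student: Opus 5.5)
The plan is to repeat the construction of Lemma~\ref{lem:zero-row} with eight chores in place of seven. Let $z$ be the agent whose row vanishes, and let $i,i'$ be the other two agents. Pick any two distinct chores $g_1,g_2\in\Mset_8$. I will use the allocation $X_i=\{g_1\}$, $X_{i'}=\{g_2\}$, and $X_z=\Mset_8\setminus\{g_1,g_2\}$, so that $z$ receives the remaining six chores. This is a partition of $\Mset_8$, so it is a legitimate allocation in the sense of Section~\ref{sec:preliminaries}.

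Next I check \eqref{eq:efx} one agent at a time. Consider agent $i$. Its bundle is nonempty, so take its only chore $g=g_1$. Then $X_i\setminus\{g\}=\varnothing$, which gives $c_i(X_i\setminus\{g\})=0$. On the other side, $c_i(X_j)\geq 0$ for every $j$, because the costs are nonnegative and additive. So the inequality holds, and the same argument applies word for word to $i'$. Now consider agent $z$. For every owned $g$ and every $j$, both $c_z(X_z\setminus\{g\})$ and $c_z(X_j)$ are sums of entries of row $z$, which is identically zero. The inequality therefore reads $0\leq 0$. This holds even though $z$ owns zero-cost chores, which the zero-tolerant predicate explicitly quantifies.

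No step here is a real obstacle; the argument is a verbatim transcription of the seven-chore case. The only points to watch are two. First, $m=8\geq 2$, so two distinct singleton chores can be handed out. Second, the trims of $z$'s bundle must be handled through the conjunction over owned $g$ rather than the empty-bundle alternative. As in Remark~\ref{rem:residual}, this case reduces to $c_z(X_z)\leq c_z(X_j)$, which is $0\leq 0$.
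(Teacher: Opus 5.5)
Your proposal is correct and matches the paper's proof: each non-$z$ agent receives a singleton, so its trimmed cost is $0$, which is at most any nonnegative bundle cost, and $z$ receives the other six chores and evaluates every bundle and trim at $0$. Your check for the non-$z$ agents never uses that their rows are nonzero, so it also covers the case, noted explicitly in the paper, where a second row is zero.
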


\begin{proof}
Give one chore to each of the two agents other than $z$, and give the remaining
six chores to $z$. Each agent other than $z$ owns a singleton; removing that
singleton leaves cost zero, at most the nonnegative cost of every bundle.
Agent $z$ evaluates every bundle and every trimmed bundle at zero. The same
argument applies if one of the other two rows is also zero. \qed
\end{proof}

For each agent, let the full set of weak minima be
\begin{equation}
  L_i=\argmin_{g\in\Mset_8}c_{ig}.
  \label{eq:eight-argmins}
\end{equation}
We recall exactly the matching statement used below. Given an allocation
$A=(A_u)_{u\in U}$ to $n$ abstract bundle vertices, its EFX graph $G_A$ has
agent side $N$ and bundle side $U$. For nonempty $A_u$, the edge $(i,u)$ is
present exactly when
$\max_{f\in A_u}c_i(A_u\setminus\{f\}) \leq \min_{v\in U}c_i(A_v)$.
Kobayashi, Mahara, and Sakamoto's Observation~2.2(i) additionally says that
$|A_u|\leq1$ makes $(i,u)$ an edge for every agent $i$, explicitly including
$|A_u|=0$ \cite{KobayashiMaharaSakamotoArXiv}.

\begin{lemma}[Kobayashi--Mahara--Sakamoto, Lemma~4.2]
\label{lem:kms-insertion}
Let $S$ be a finite chore set, let $\Mset'\subsetneq S$, and let
$e\in S\setminus\Mset'$ satisfy
$c_i(e)\leq c_i(e')$ for every $i\in[n-1]$ and $e'\in\Mset'$.
Let $A=(A_u)_{u\in U}$ be an allocation to $U$ of $\Mset'$ whose EFX graph
$G_A$ has a perfect matching. Then there is a vertex $u^\ast\in U$ such that
the EFX graph still has a perfect matching after $e$ is added to
$A_{u^\ast}$.
\end{lemma}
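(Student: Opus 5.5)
The plan is to keep a perfect matching $M$ of $G_A$ fixed and track how edges change when $e$ is added to one bundle. Write $A'$ for the new allocation, with $A'_{u}=A_{u}\cup\{e\}$ for the chosen $u$ and $A'_v=A_v$ otherwise. Call the agents of $[n-1]$ \emph{regular} and the remaining agent $n$ \emph{special}. Throughout I use $c_i(e)\leq c_i(e')$ for regular $i$ and $e'\in\Mset'$.

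\emph{Step 1: bookkeeping of edges.} Adding $e$ only raises bundle costs, so $\min_{v}c_i(A'_v)\geq\min_v c_i(A_v)$ for every agent. For $v\neq u$ the residual $\max_f c_i(A_v\setminus\{f\})$ is unchanged. Hence every old edge $(i,v)$ with $v\neq u$ survives.

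For the modified bundle and a regular $i$, additivity and minimality of $e$ give
\[
\max_{f}c_i(A'_u\setminus\{f\})=c_i(A_u)+c_i(e)-c_i(e)=c_i(A_u).
\]
So $(i,u)$ is an edge of $G_{A'}$ exactly when $c_i(A_u)\leq c_i(A_v)$ for all $v\neq u$, i.e.\ when $u$ is a cheapest bundle for $i$ in $A$. I also record a second fact for all agents, including the special one. For each agent $i$, fix a cheapest bundle $\varphi(i)\in\argmin_v c_i(A_v)$. Then $(i,\varphi(i))$ is an edge of $G_A$, since the residual is at most the cost, which is the minimum. By the first paragraph it stays an edge of $G_{A'}$ unless $\varphi(i)=u$.

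\emph{Step 2: a rotation along the cheapest-bundle map.} Define the map on bundles $\sigma(v)=\varphi(M^{-1}(v))$; this is the cheapest bundle of the agent currently holding $v$. Iterating $\sigma$ from any bundle reaches a cycle $v_1\to v_2\to\dots\to v_k\to v_1$.

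Suppose some $v_t$ on the cycle has a regular agent among the holders of the cycle's bundles that points into it. I then reassign along the cycle: each holder $M^{-1}(v_s)$ takes $\varphi(M^{-1}(v_s))=v_{s+1}$. I set $u^\ast:=v_{t}$, where $v_t=\varphi(i)$ for a regular holder $i$. Every agent off the cycle keeps its $M$-bundle, and those edges survive by Step 1 because that bundle is not $u^\ast$. Every agent on the cycle receives its cheapest bundle. For bundles other than $u^\ast$ this is an edge by the second fact of Step 1. For $u^\ast$ itself, the recipient is regular and $u^\ast$ is cheapest for it, which is an edge by the first fact. This yields a perfect matching of $G_{A'}$.

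\emph{Step 3: the degenerate case, expected to be the main obstacle.} The rotation fails only when every cycle of $\sigma$ consists of bundles pointed to by the special agent alone. Since a cycle's pointers come from its own holders, this forces the cycle $\{u_n\}$ with $u_n=M(n)=\varphi(n)$, and all of $\sigma$ funnels into this fixed point. Here I would not place $e$ on $u_n$, because the special agent's residual on $A_{u_n}\cup\{e\}$ can exceed its other bundle costs.

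First I would try the following. Take a regular agent $i_0$ whose cheapest bundle is $u_n$ (one exists on the funnel unless $\sigma$ maps every bundle to $u_n$ only through $n$). Then look for an alternating path in $G_A$ that frees a bundle $u^\ast\neq u_n$ which is cheapest for some regular agent, while the special agent moves to another bundle it has an edge to. Failing that, I would argue directly via Hall's condition in $G_{A'}$. A Hall violator would have to be a set of agents all of whose surviving edges avoid $u^\ast$. Combined with the cheapest-bundle edges and the freedom to choose $u^\ast$ among several cheapest bundles, the aim is to contradict the existence of $M$. Making this choice of $u^\ast$ uniform is the step I expect to need the most care.
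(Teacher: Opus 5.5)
Your Steps~1 and~2 are sound. For a regular agent the residual of $A_u\cup\{e\}$ is exactly $c_i(A_u)$. Edges to untouched bundles survive because each agent's minimum bundle cost can only rise. Rotating a $\sigma$-cycle that contains a regular holder, with $u^\ast$ the bundle that holder points to, does give a perfect matching of $G_{A'}$.

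\textbf{The gap is Step~3.} It is the only remaining case, it is not proved, and the direction you announce for it cannot work in general. You decline to put $e$ on $u_n$ and look for some $u^\ast\neq u_n$ that is cheapest for a regular agent. In the degenerate case no such bundle need exist, and $u^\ast=u_n$ can be forced. Take $n=2$, $A_a=\{x\}$, $A_b=\{y,z\}$, $c_1(x)=c_1(y)=c_1(z)=1$, $c_2(x)=1$, $c_2(y)=c_2(z)=2$, $c_1(e)=0$, and $c_2(e)\geq0$ arbitrary.
\begin{itemize}
\item Agent~2 has no edge to $b$ (residual $2>1$), so the unique perfect matching is $1\to b$, $2\to a$.
\item Both agents have $a$ as their only cheapest bundle, so this is your Step~3 with $u_n=a$.
\item Adding $e$ to $b$ leaves $A_b\cup\{e\}$ with no edge at all: agent~1's residual there is $2$ and agent~2's is at least $4$, while both minimum bundle costs are still $1$.
\end{itemize}
The alternating-path plan therefore has nothing to free, and the Hall-type fallback is only an aim, not an argument.

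\textbf{The missing idea is to put $e$ on $u_n$ after all.} The situation you feared is exactly what makes this work.
\begin{itemize}
\item If agent $n$ still has an edge to $A_{u_n}\cup\{e\}$, then $M$ itself is a perfect matching of $G_{A'}$, because every other matched bundle differs from $u_n$.
\item If not, then, since a residual never exceeds its bundle cost, $A_{u_n}\cup\{e\}$ is no longer cheapest for $n$. Some $w\neq u_n$ attains $\min_v c_n(A'_v)$, so $(n,w)$ is an edge of $G_{A'}$.
\item Since $\{u_n\}$ is the only cycle of $\sigma$, the orbit $b_0=w$, $b_{t+1}=\sigma(b_t)$ first reaches $u_n$ at some $b_k$ with $k\geq1$. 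The bundles $b_0,\dots,b_{k-1}$ are distinct and different from $u_n$, hence held by regular agents.
\item Send $n$ to $b_0$ and $M^{-1}(b_t)$ to $b_{t+1}=\varphi(M^{-1}(b_t))$ for $t<k$, leaving everyone else on $M$.
\item The agent $M^{-1}(b_{k-1})$ is regular with $u_n$ cheapest, so your first fact gives its edge to $A_{u_n}\cup\{e\}$. The edge $(n,w)$ was just established. Every remaining assignment is a $G_A$-edge to a bundle other than $u_n$, which survives by Step~1.
\end{itemize}

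With this case added, your argument is a complete self-contained proof. The paper itself does not reprove the lemma. It imports the source's proof (through Observation~2.2, Lemma~2.4 and Corollary~2.5) and checks only that the cheapness condition is the hypothesis that proof actually uses.
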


Lemma~4.2 is stated in the source's Section~4, whose standing hypothesis is that
agents $1,\ldots,n-1$ have identical ordering cost functions. The displayed
cheapness condition is nevertheless the only hypothesis the lemma uses: its proof
invokes Observation~2.2, Lemma~2.4 and Corollary~2.5, and never identical ordering.
The source flags this reuse at the point of statement (``which will be used also in
Section~5'') and later applies the lemma to personalized bi-valued instances that
need not be identically ordered \cite[Lemma~4.2; Sect.~5,
Case~1-2]{KobayashiMaharaSakamotoArXiv}. Here $[n-1]=\{1,\ldots,n-1\}$ is the
source's one-based indexing, and the remaining agent need not regard $e$ as
cheapest.

\begin{lemma}[Shared-minimum lifting]
\label{lem:shared-minimum}
If $L_p\cap L_q\neq\varnothing$ for two distinct agents $p,q$, then the
eight-chore instance has an EFX allocation.
\end{lemma}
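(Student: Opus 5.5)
Pick a chore $e\in L_p\cap L_q$ and set $\Mset'=\Mset_8\setminus\{e\}$, a set of seven chores. Restricting the three cost rows to $\Mset'$ gives a nonnegative additive $3$-by-$7$ instance, so Theorem~\ref{thm:main} supplies an allocation $A=(A_0,A_1,A_2)$ of $\Mset'$ that is EFX in the sense of Definition~\ref{def:efx}. The plan is to view $A$ as an allocation to three abstract bundle vertices $U=\{0,1,2\}$, show that its EFX graph $G_A$ has a perfect matching, insert $e$ with Lemma~\ref{lem:kms-insertion}, and read off an EFX allocation of $\Mset_8$ from the new perfect matching.

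\textbf{The matching for $A$.} I take the identity matching $i\mapsto u=i$.
\begin{itemize}
\item If $A_i$ is nonempty, \eqref{eq:efx} gives $c_i(A_i\setminus\{f\})\leq c_i(A_j)$ for every $f\in A_i$ and every $j$. Taking the maximum over $f$ and the minimum over $j$ yields exactly the edge condition $\max_f c_i(A_i\setminus\{f\})\leq\min_v c_i(A_v)$.
\item If $A_i$ is empty, the edge $(i,i)$ is present by the cited Observation~2.2(i), which covers $|A_u|=0$.
\end{itemize}
Hence $G_A$ has a perfect matching.

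\textbf{Applying the insertion lemma.} Relabel the agents so that $p,q$ play the roles of the source's agents $1,2$, which form the set $[n-1]$ with $n=3$, and the third agent plays agent $0$ (the source's agent $n$). Because $e\in L_p\cap L_q$, we have $c_p(e)\leq c_p(e')$ and $c_q(e)\leq c_q(e')$ for all $e'\in\Mset'$. This is the lemma's only hypothesis, as the paper notes. The lemma, with $S=\Mset_8$, then produces $u^\ast$ such that the allocation $A'$, obtained from $A$ by adding $e$ to $A_{u^\ast}$, still has a perfect matching $\sigma\colon N\to U$ in $G_{A'}$.

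\textbf{Converting the matching to an allocation.} Define $X_i=A'_{\sigma(i)}$; this is a partition of $\Mset_8$. I check \eqref{eq:efx} for each agent $i$.
\begin{itemize}
\item If $X_i$ is empty, the condition holds vacuously.
\item If $X_i$ is nonempty, the edge $(i,\sigma(i))$ gives, for every $g\in X_i$ and every $j$, $c_i(X_i\setminus\{g\})\leq\max_f c_i(A'_{\sigma(i)}\setminus\{f\})\leq\min_v c_i(A'_v)\leq c_i(X_j)$. The last step uses that $\sigma$ is a bijection, so $X_j$ is one of the bundles $A'_v$.
\end{itemize}
This is the content of the cited Observation~2.3, applied to the singleton edge case as well.

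\textbf{Main obstacle.} The delicate step is the faithful transfer of the cited lemma. Two points need checking. First, the edge definition must use the same predicate as Definition~\ref{def:efx}, with zero-cost chores quantified and the cost of $\min_v$ computed from agent $i$'s row, so that the identity matching is valid and the final conversion is valid. Second, the relabeling to one-based indices must place the two agents sharing the minimum in $[n-1]$, leaving the remaining agent unconstrained. Zero rows need no separate treatment here, because Theorem~\ref{thm:main} already covers them.
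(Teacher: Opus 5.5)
Your proposal is correct and follows the paper's proof essentially step for step. You delete a shared weak minimum $e$, apply Theorem~\ref{thm:main} to the remaining seven chores, certify the identity matching in $G_A$ (using Observation~2.2(i) for empty bundles), relabel so that $\{p,q\}=[n-1]$, insert $e$ via Lemma~\ref{lem:kms-insertion}, and read off an EFX allocation from the new perfect matching. The only differences are cosmetic: you verify the content of Observation~2.3 inline instead of citing it, and you leave implicit the paper's remark that relabeling transports the perfect matching, which need not remain the identity.
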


\begin{proof}
Choose $e\in L_p\cap L_q$, delete it, and apply Theorem~\ref{thm:main} to the
remaining three-agent, seven-chore instance. Let $A=(A_0,A_1,A_2)$ be the
resulting EFX allocation, now regarded as an allocation to three abstract
bundle vertices. The identity assignment is a perfect matching in $G_A$:
if $A_i\neq\varnothing$, its identity edge is precisely the residual form of
EFX recorded in Remark~\ref{rem:residual}; if $A_i=\varnothing$,
Observation~2.2(i) supplies that edge. Thus empty bundles in the seven-chore
allocation are allowed.

Temporarily adopt the source's one-based labels and relabel agents so that
$\{p,q\}=[n-1]$. Relabeling permutes the agent side of $G_A$ and transports
the perfect matching; the matching need not remain the identity. Since $e$
lies in both full argmin sets,
$c_i(e)\leq c_i(h)$ for $i\in\{p,q\}$ and $h\in\Mset_8\setminus\{e\}$.
All hypotheses of Lemma~\ref{lem:kms-insertion} now hold with
$\Mset'=\Mset_8\setminus\{e\}$. Add $e$ to the bundle vertex supplied by that
lemma. The new EFX graph has a perfect matching, and Observation~2.3 assigns
the matched bundles to agents to obtain an EFX allocation of all eight chores
\cite{KobayashiMaharaSakamotoArXiv}. The displayed inequalities are weak, so
ties and zero costs are included. \qed
\end{proof}

\subsection{Residual class and exact encoding}
\label{subsec:eight-residual}

By Lemma~\ref{lem:shared-minimum}, a counterexample must satisfy
\begin{equation}
  L_0\cap L_1=L_0\cap L_2=L_1\cap L_2=\varnothing.
  \label{eq:disjoint-argmins}
\end{equation}
Choose $e_i\in L_i$. The three choices are distinct, so one common chore
permutation can send $e_i$ to column $i$; chore permutation preserves the
EFX-allocation set at any $m$, by the reasoning of
Lemma~\ref{lem:column-sort}. By Lemma~\ref{lem:zero-row-eight}, every
remaining row has positive total, so the row-scaling argument of
Lemma~\ref{lem:row-scaling} gives unit row sums. The five unpinned columns
$3,\ldots,7$ remain interchangeable and may be put in weak lexicographic
order; equality is allowed by \eqref{eq:lex}. The pinned columns $0,1,2$ are
not sorted among themselves or against the free columns.

Write $K_g=(c_{0g},c_{1g},c_{2g})$. For $a\in\Nset^8$, let
$X_k^{(8)}(a)=\{g\in\Mset_8:a_g=k\}$, and define
\begin{equation}
  \Bad_a^{(8)}(C)=
  \bigvee_{\substack{i\in\Nset,\ g\in X_i^{(8)}(a)\\
                     j\in\Nset\setminus\{i\}}}
  \left(
    \sum_{h\in X_i^{(8)}(a)}c_{ih}-c_{ig}
    >
    \sum_{h\in X_j^{(8)}(a)}c_{ih}
  \right).
  \label{eq:eight-bad-clause}
\end{equation}
As in \eqref{eq:bad-clause}, this clause is true if and only if allocation $a$
is not EFX. It trims the envier's bundle, evaluates both sides in row $i$, uses
strict $>$, and includes zero-cost owned chores. Omitting $j=i$ leaves exactly
$2\sum_{i\in\Nset}|X_i^{(8)}(a)|=16$ literal positions in every clause,
including allocations with empty bundles.

The residual formula contains 24 real-sorted cost variables. The SMT
$\mathsf{Real}$ sort is unbounded; the unit-sum and nonnegativity constraints
confine feasible rows to simplices:
\begin{align}
  \Phires(C)={}&
  \left(\bigwedge_{\substack{i\in\Nset\\g\in\Mset_8}}c_{ig}\geq0\right)
  \land
  \left(\bigwedge_{i\in\Nset}\sum_{g\in\Mset_8}c_{ig}=1\right)
  \land
  \left(\bigwedge_{\substack{i\in\Nset\\g\in\Mset_8}}
    c_{ii}\leq c_{ig}\right)
  \notag\\
  &\land
  \left[
  \bigwedge_{\substack{p,q\in\Nset,\ p<q\\g\in\Mset_8}}
  \left(
    \left(\bigvee_{\substack{h\in\Mset_8\\h\neq g}}
      c_{ph}<c_{pg}\right)
    \lor
    \left(\bigvee_{\substack{h\in\Mset_8\\h\neq g}}
      c_{qh}<c_{qg}\right)
  \right)\right]
  \notag\\
  &\land
  \left(\bigwedge_{g=3}^{6}K_g\leq_{\mathrm{lex}}K_{g+1}\right)
  \land
  \left(\bigwedge_{a\in\Nset^8}\Bad_a^{(8)}(C)\right).
  \label{eq:phi-res}
\end{align}
The third conjunct pins a chosen weak minimum of row $i$ to column $i$. The
bracketed conjunct is exactly \eqref{eq:disjoint-argmins}: for every agent pair and
chore $g$, some other chore is strictly cheaper in at least one of the two rows, so
it excludes a shared full argmin without excluding ties within one row. The last
conjunct has all $3^8=6561$ complete labeled allocations; with the matrix
constraints this totals $6640$ top-level assertions.

\begin{proposition}[Exact residual reduction]
\label{prop:eight-exact-reduction}
The formula $\Phires$ is satisfiable if and only if the
pairwise-disjoint-argmin class contains an eight-chore counterexample.
\end{proposition}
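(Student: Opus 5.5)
The proof is two-directional and mirrors Proposition~\ref{prop:exact-reduction}. I first fix what ``the pairwise-disjoint-argmin class contains a counterexample'' means: a nonnegative additive $3$-by-$8$ counterexample (Definition~\ref{def:counterexample}) satisfying \eqref{eq:disjoint-argmins}. Such a matrix automatically has positive row totals, since by Lemma~\ref{lem:zero-row-eight} a zero row admits an EFX allocation.

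\emph{Soundness (model $\Rightarrow$ counterexample).} Take a model $C$ of $\Phires$. It is a nonnegative $3$-by-$8$ matrix. Every conjunct $\Bad^{(8)}_a(C)$ holds, and by the remark after \eqref{eq:eight-bad-clause} this says exactly that allocation $a$ is not EFX. The omitted $j=i$ literals cannot witness failure under nonnegativity, as noted after \eqref{eq:strict-failure}. Since every $a\in\Nset^8$ appears, $C$ is a counterexample. The argmin sets of $C$ are pairwise disjoint. Indeed, a chore $g\in L_p\cap L_q$ would satisfy $c_{ph}\ge c_{pg}$ and $c_{qh}\ge c_{qg}$ for all $h\ne g$, falsifying the bracketed disjunction for $(p,q,g)$. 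So $C$ lies in the class. This direction ignores the pinning and sorting conjuncts.

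\emph{Completeness (counterexample $\Rightarrow$ model).} Start from a counterexample $C$ satisfying \eqref{eq:disjoint-argmins}. All row totals are positive, so we rescale each row to unit sum. By Lemma~\ref{lem:row-scaling} this preserves the EFX-allocation set, hence the counterexample property. It also preserves each $L_i$, because scaling by a positive factor preserves the order within a row. Choose $e_i\in L_i$; these are distinct by disjointness. Apply a chore permutation $\sigma$ sending $e_i\mapsto i$ for $i=0,1,2$ and then sorting the five remaining columns into weak lexicographic order, which is possible because \eqref{eq:lex} is a total preorder. As in Lemma~\ref{lem:column-sort}, a chore permutation maps allocations bijectively and preserves every sum in \eqref{eq:efx}, so the permuted matrix is still a counterexample. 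Permutation carries argmin sets to their images, so the images remain pairwise disjoint and $i\in L_i$ after relabeling. That gives the pinning conjunct $c_{ii}\le c_{ig}$.

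It then remains to check the conjuncts of $\Phires$ one by one:
\begin{itemize}
\item nonnegativity and unit sums hold by construction;
\item pinning holds as just shown;
\item the bracketed conjunct holds because $g$ cannot lie in both $L_p$ and $L_q$, so in one of the two rows some $h$ is strictly cheaper;
\item the lex chain on columns $3,\dots,7$ holds by the sort;
\item each $\Bad^{(8)}_a$ holds because every allocation fails EFX. Failure supplies a witness $(i,g,j)$ of \eqref{eq:strict-failure}, which necessarily has $j\ne i$ and is therefore a literal of the clause.
\end{itemize}

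The main obstacle is bookkeeping rather than depth: making sure the pinning and the partial sort can be done simultaneously by a single permutation. The sort must act only on the unpinned positions, and the order of operations matters: rescale first, then choose the $e_i$, then permute. I also need the bracketed disjunction to encode disjointness exactly, meaning it excludes shared full argmins without forbidding ties within a row. I would verify that equivalence explicitly: $g\notin L_p\cap L_q$ holds iff $\exists h\ne g$ with $c_{ph}<c_{pg}$, or $\exists h\ne g$ with $c_{qh}<c_{qg}$.
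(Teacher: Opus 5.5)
Your proposal is correct and follows the same two-direction argument as the paper's proof in Appendix~\ref{app:proofs}. In one direction you pick a minimum in each row, pin the three distinct choices to columns $0,1,2$, normalize, and sort the five free columns; in the other you read disjointness and the counterexample property directly off the conjuncts of $\Phires$. You are more explicit than the paper about why the bracketed conjunct is exactly \eqref{eq:disjoint-argmins} and why the omitted $j=i$ literals lose nothing, though your concern about the order of operations is unnecessary, since positive row scaling preserves each $L_i$ and commutes with column permutation.
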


The proof is in Appendix~\ref{app:proofs}. The shared-minimum reduction is
load-bearing for the machine step: without it the unrestricted $6561$-clause
formula timed out in several solver configurations. Those timeouts are no result;
only Lemma~\ref{lem:shared-minimum} and
Proposition~\ref{prop:eight-exact-reduction} make residual unsatisfiability
sufficient.

\subsection{The residual machine decision}
\label{subsec:eight-machine}

\begin{proposition}[Computer-assisted residual decision]
\label{prop:eight-machine-unsat}
The formula $\Phires$ in \eqref{eq:phi-res} is unsatisfiable.
\end{proposition}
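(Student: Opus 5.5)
The plan is to follow the pattern of Proposition~\ref{prop:machine-unsat}: this is a computer-assisted statement, and the proof is a solver decision rather than a hand argument. I would write a generator that enumerates $\{0,1,2\}^8$ in full. It emits the $24$ real variables $c_{ig}$, the $24$ nonnegativity atoms, the three unit-sum equalities, and the $24$ pinning atoms $c_{ii}\leq c_{ig}$. It then emits the $3\cdot 8=24$ disjoint-argmin clauses of the bracketed conjunct, the four lexicographic constraints on columns $3,\ldots,7$ written as the four-way disjunction \eqref{eq:lex}, and one $\Bad^{(8)}_a$ clause per assignment. Before solving, it checks mechanically that there are $6640$ top-level assertions and that every badness clause has exactly $16$ literals. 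These counts guard against the characteristic failure mode of an UNSAT argument, namely a dropped clause, a flipped polarity, or a wrong inequality direction.

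Next I would run Z3 and cvc5 independently on the same formula, each under its own transcription. For cvc5 I would enable \texttt{produce-proofs} and \texttt{check-proofs}. Each \texttt{unsat} answer would be recorded as in Table~\ref{tab:runs}. The proposition is established once at least one sound solver returns \texttt{unsat} on a faithfully transcribed formula. The second solver adds solver diversity but not encoding independence. I would therefore also build a cross-checking encoder written directly from \eqref{eq:efx}. It would keep the $j=i$ literals and use the opposite column order or unnormalized positive rows. This encoder must also return \texttt{unsat}.

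To detect vacuous unsatisfiability, I would run mutation checks. Deleting a single $\Bad_a^{(8)}$ clause for a known EFX allocation of a sample instance, or replacing $>$ by $\geq$, should make the formula satisfiable. I would also check that the matrix constraints together with the disjointness and lex conjuncts are satisfiable without the badness clauses, which shows that the canonical domain is nonempty.

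The main obstacle is solver performance rather than correctness. The unrestricted $6561$-clause formula already timed out, and even the residual formula is heavy. My first attempt would tune Z3, for example by selecting the arithmetic solver and phase options, and I would allow cvc5 a long budget. If both still time out, I would add sound symmetry-breaking or case splits on which free columns attain ties, provided each split is justified by hand as in Section~\ref{sec:eight}. Every returned \texttt{unsat} would be checked for the absence of \texttt{unknown} and timeouts.
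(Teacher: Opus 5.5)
Your plan is essentially the paper's proof. The generator you describe yields exactly the $6640$ assertions and $16$-position clauses of $\Phires$. As in the paper, the proposition rests on \texttt{unsat} from Z3 under a tuned arithmetic-solver and phase setting and from cvc5 with internal proof checking, corroborated by a cross-checking encoder with unnormalized rows, reversed free-column order and retained $j=i$ literals.

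One expectation in your corroboration is wrong: deleting a single $\Bad_a^{(8)}$ clause for an allocation that is EFX on some sample instance need not make the formula satisfiable, since that requires a canonical matrix whose \emph{only} EFX allocation is $a$. This is why the paper's polarity check pins a matrix and drops the clauses of all its EFX allocations together.
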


\begin{proof}[computer-assisted]
When invoked with \texttt{--residual-\allowbreak disjoint-\allowbreak argmins},
the scripts \texttt{s6\_all\_z3.py} and \texttt{s6\_all\_cvc5.py} construct
\eqref{eq:phi-res} directly in QF\_LRA. The last two rows of
Table~\ref{tab:runs} report the recorded \texttt{unsat} decisions; two further
Z3 strategies reached the same answer, and the cvc5 transcription did so with
internal proof checking enabled (Appendix~\ref{app:corroborating}). A
separately written cross-checking encoder,
\texttt{s6\_\allowbreak residual\_\allowbreak referee.py}, also returned
\texttt{unsat} after removing
row normalization, reversing the free-column order, and retaining the harmless
$j=i$ literals. It is another encoding from this project, not an audit
independent of the project. \qed
\end{proof}

\begin{proof}[of Theorem~\ref{thm:eight}]
If some row is zero, apply Lemma~\ref{lem:zero-row-eight}. Otherwise, if two
full argmin sets intersect, apply Lemma~\ref{lem:shared-minimum}. In the
remaining case the argmin sets are pairwise disjoint. A counterexample would
then yield a model of $\Phires$ by
Proposition~\ref{prop:eight-exact-reduction}, contradicting
Proposition~\ref{prop:eight-machine-unsat}. Hence every nonnegative additive
three-agent, eight-chore instance has an EFX allocation. \qed
\end{proof}

\section{Soundness of the Machine Step and Trusted Base}
\label{sec:soundness}

The principal risk of an UNSAT-based existence argument is that an
over-constrained or incorrectly polarized formula is unsatisfiable for the wrong
reason. Four families of checks address that risk, and
Appendix~\ref{app:soundness} reports them in full. \emph{Predicate equivalence}
was tested by comparing each generated clause with a separate literal
transcription of \eqref{eq:efx} on $54{,}675$ matrix-allocation pairs at $m=7$
and $52{,}488$ at $m=8$, with zero disagreements. \emph{Mutation testing}
confirmed that this comparison rejects the five plausible mistakes: trimming the
envied bundle as in goods EFX, skipping zero-cost owned chores, making the
boundary non-strict, evaluating $X_j$ in row $j$, and omitting the trim.
\emph{Polarity} was checked behaviorally on a stored matrix with exactly six EFX
allocations, whose $2187$-clause formula becomes satisfiable exactly when those
six clauses are dropped together: that is the behaviour of $\bigwedge_a\Bad_a$,
not of a formula selecting one bad allocation. Finally, \emph{controls and
calibrations}: at both cardinalities, replacing $>$ by $\geq$ and omitting the
trim are each satisfiable, so UNSAT is not a generic consequence of placing many
disjunctive clauses over 21 or 24 variables, and the known case $n=3$, $m=6$ of
the $m\leq2n$ theorem \cite{KobayashiMaharaSakamoto2025} was decided
unsatisfiable by the same generators. These checks supplement, but do not
replace, the equivalences of Propositions~\ref{prop:exact-reduction}
and~\ref{prop:eight-exact-reduction}.

\paragraph{Trusted base.}
Theorem~\ref{thm:main} depends on the exact reduction, the zero-row and symmetry
lemmas, the two formula generators, and the QF\_LRA UNSAT decisions.
Theorem~\ref{thm:eight} additionally depends on Theorem~\ref{thm:main}, the cited
Lemma~4.2 and Observations~2.2--2.3, the eight-chore zero-row and exact residual
reductions, a faithful residual generator, and its UNSAT decision. Everything
else above is corroboration rather than a logical premise: the two primary
generators per theorem give solver diversity, not independent mathematical
encodings, and the third encoding, mutation battery, controls and calibrations
were produced inside the same machine-assisted workflow, so they give
transcription diversity rather than an audit by an independent group.

\section{Discussion and Open Problems}
\label{sec:discussion}

\paragraph{A fixed frontier, and the next cardinality.}
Theorems~\ref{thm:main} and~\ref{thm:eight} advance the three-agent cardinality
frontier from six through seven to eight chores, in two separate theorem steps.
The negative result of He and Tao \cite{HeTao2026} rules out reading either as one
case of a theorem holding for every $n$ at unrestricted $m$; neither is a statement
about every $n$ at the next two frontier values $m=2n+1$ or $m=2n+2$ either, since
their smallest counterexample has $n=4$ and $m=13$, and their general construction
uses more than $3n$ chores, in both cases strictly above those values.
The same lifting argument
does reduce $m=9$: if two agents shared a minimum, delete it, apply
Theorem~\ref{thm:eight} to the eight-chore remainder, and invoke
Lemma~\ref{lem:kms-insertion}; otherwise the full argmin sets would be pairwise
disjoint, leaving a different residual formula with $3^9$ allocation clauses. This paper makes no existence claim for $m\geq9$.
Since the clause count grows exponentially in $m$ and the eight-chore residual
already needed both a borrowed lemma and a specific solver strategy, the
interesting question is not whether one more cardinality can be decided, but
whether the disjoint-argmin structure that the lifting step exposes can be turned
into a potential argument valid for all $m$.

\paragraph{Which reduction the machine needs.}
For seven chores the two primary decisions use both row normalization and column
sorting, but not for the same reason. Column sorting is what makes the decision
terminate: a normalized Z3 run without it was stopped after 4 hours 19 minutes
with no result, and runs with only $c_{ig}\geq0$ returned \texttt{unknown} after
3600 seconds in both solvers. Row normalization is a convenience, licensed by
Lemma~\ref{lem:row-scaling} and not needed for termination, as the
\texttt{posrows} run of Appendix~\ref{app:corroborating} shows. The inconclusive
runs are evidence neither for nor against the theorem, and both lemmas remain part
of the trusted argument, because they license the canonical constraints that the
primary formula asserts.

\paragraph{Algorithmic reading, and what a referee should check.}
For a given rational $3$-by-$7$ or $3$-by-$8$ matrix one may enumerate all $2187$
or $6561$ complete allocations and return the first that satisfies
\eqref{eq:efx}; the corresponding theorem proves that this fixed search cannot
exhaust the list. These are constants at the proved parameters, so we make no
asymptotic claim. A referee can check Sections~\ref{sec:seven}
and~\ref{sec:eight} by hand, the formula construction against
Appendix~\ref{app:generators}, and Table~\ref{tab:runs} against
Appendix~\ref{app:reproducing}.

\bibliographystyle{splncs04}
\bibliography{refs}

\appendix

\section{Omitted Proofs}
\label{app:proofs}

\begin{proof}[of Proposition~\ref{prop:exact-reduction}]
Suppose first that a counterexample with positive row totals exists. Apply the
independent positive row scalings of Lemma~\ref{lem:row-scaling}, then permute
its columns as in Lemma~\ref{lem:column-sort}. Because every allocation of the
original matrix is non-EFX and chore relabeling is a bijection on allocations,
every $\Bad_a$ is true for the canonical matrix, which is therefore a model of
$\Phiall$.

Conversely, a model of $\Phiall$ is a nonnegative normalized cost matrix, so
every row has positive total. For every complete allocation $a$, its conjunct
$\Bad_a$ is true. By \eqref{eq:bad-clause}, every allocation fails
\eqref{eq:efx}, so the matrix is a counterexample. \qed
\end{proof}

\begin{proof}[of Proposition~\ref{prop:eight-exact-reduction}]
Given a counterexample in the pairwise-disjoint-argmin class, choose one
minimum from each row. Pairwise disjointness makes the choices distinct, so
permute them to columns $0,1,2$, independently normalize the positive rows, and
sort the five remaining columns. These operations preserve every EFX
allocation. Exact disjointness gives the bracketed conjunct of
\eqref{eq:phi-res}, and because every allocation is non-EFX, every
$\Bad_a^{(8)}$ is true. The resulting matrix is a model of $\Phires$.

Conversely, a model of $\Phires$ is a nonnegative unit-row matrix with
pairwise-disjoint full argmin sets. Its $6561$ badness conjuncts state that
every complete allocation fails Definition~\ref{def:efx}, so it is a
counterexample in the residual class. \qed
\end{proof}

\section{Corroborating and Control Runs}
\label{app:corroborating}

Tables~\ref{tab:corroborating-runs} and~\ref{tab:eight-runs} list the recorded
runs that are not in Table~\ref{tab:runs}. All are corroboration; none
replaces the exact reductions. One discrepancy inside the record should be
noted: for the two $\Phiall$ decisions of Table~\ref{tab:runs} the campaign
record dumps check times of 515.499\,s and 475.351\,s, against the 517.911\,s
and 483.318\,s of the campaign summary that Table~\ref{tab:runs} reproduces.
Both dumps report \texttt{unsat}.

\begin{table}[htbp]
  \centering
  \caption{Recorded corroborating seven-chore decisions from the
  cross-checking encoder, which was written during verification and is
  reported rather than archived with the primary scripts.}
  \label{tab:corroborating-runs}
  \small
  \begin{tabular}{@{}p{0.20\linewidth}p{0.40\linewidth}l r@{}}
    \toprule
    Run & Matrix constraints & Solver/result & Check (s) \\
    \midrule
    \texttt{posrows}
      & $c\geq0$, every row sum $>0$, ascending column lex
      & Z3 5.1.0/\texttt{unsat} & 149.2 \\
    \texttt{sort-desc}
      & $c\geq0$, every row sum $=1$, descending column lex
      & Z3 5.1.0/\texttt{unsat} & 521.7 \\
    $m=6$ calibration
      & $c\geq0$ only; no normalization or column ordering
      & Z3 5.1.0/\texttt{unsat} & 484.2 \\
    \bottomrule
  \end{tabular}
\end{table}

\begin{table}[htbp]
  \centering
  \caption{Recorded eight-chore runs beyond Table~\ref{tab:runs}. The $m=7$
  residual rows are known-true calibration of these generators, not evidence
  for Theorem~\ref{thm:eight}. The two $m=8$ controls show that the residual
  harness can return \texttt{sat} when the boundary or the target predicate is
  changed.}
  \label{tab:eight-runs}
  \small
  \begin{tabular}{@{}p{0.62\linewidth}l r@{}}
    \toprule
    Run & Result & Check (s) \\
    \midrule
    $\Phires$, primary Z3 5.1.0, arithmetic 2, phase 1
      & \texttt{unsat} & 850.863 \\
    $\Phires$, primary Z3 5.1.0, arithmetic 6
      & \texttt{unsat} & 2717.209 \\
    $\Phires$, cross-checking Z3 5.1.0: positive unnormalized rows,
      descending free columns, $j=i$ retained
      & \texttt{unsat} & 469.389 \\
    Residual $m=7$, Z3 arithmetic 2
      & \texttt{unsat} & 17.952 \\
    Residual $m=7$, cvc5 with internal proof checking
      & \texttt{unsat} & 44.497 \\
    Residual $m=7$, cross-checking transcription
      & \texttt{unsat} & 4.725 \\
    Residual $m=8$, replace strict $>$ by $\geq$
      & \texttt{sat} & 79.500 \\
    Residual $m=8$, remove the trim to encode envy-freeness
      & \texttt{sat} & 12.036 \\
    \bottomrule
  \end{tabular}
\end{table}

The cross-checking encoder at $m=8$ takes only \texttt{--m} and
\texttt{--timeout}: its positive unnormalized rows, its descending free-column
order and its retained $j=i$ literals are fixed in the file rather than selected
by a flag.

The \texttt{posrows} run of Table~\ref{tab:corroborating-runs} deserves one
comment, because it is logically the strongest of the three seven-chore
corroborations. Its matrix constraints are implied by those of $\Phiall$, since a
unit row sum is positive, and its clauses are weaker, since they carry the
additional $j=i$ literals. Every model of $\Phiall$ is therefore a model of the
\texttt{posrows} formula, so that decision already entails
Proposition~\ref{prop:machine-unsat}. It is reported here rather than treated as
primary because the cross-checking encoder was written during verification and is
not part of the archive.

\section{The Full Soundness Protocol}
\label{app:soundness}

\subsection{Seven chores}
The primary clause builders were evaluated on 25 exact rational matrices and all
$2187$ allocations per matrix, that is $25\cdot2187=54{,}675$ matrix-allocation
pairs, with zero disagreements for both builders against the separate literal
transcription of \eqref{eq:efx}. The matrices included all-zero and partially
zero rows, ties, exact fractions, uniform rows, and stored low-EFX-count
witnesses; empty-bundle allocations occur in the exhaustive assignment list and
were therefore included. The mutation that adds a literal for an empty own bundle
has no semantic effect here: it adds $0>c_i(X_j)$, which nonnegativity already
refutes. The empty-bundle convention is therefore correctly represented, but that
particular implementation choice is not load-bearing.

The structural census found $2187$ distinct allocations, one clause per
allocation, no empty disjunction, and 14 literal positions in every clause, in
agreement with the count of Section~\ref{subsec:encoding}. For the three
assignments that give all
seven chores to one agent, the two comparisons coincide because both other
bundles are empty, so those clauses contain 14 triples but only seven distinct
inequalities. The $2217$ top-level assertions of the primary formula are 21
nonnegativity assertions, three row equations, six adjacent column-order
assertions, and the $2187$ allocation clauses. No individual clause was
unsatisfiable
under the matrix constraints, and those constraints alone were satisfiable, so
UNSAT does not come from a single dead clause. In the polarity check, pinning the
stored six-EFX-allocation matrix and asserting all $2187$ badness clauses is
unsatisfiable; dropping exactly the six clauses of its EFX allocations makes the
formula satisfiable, while dropping only one of the six leaves five false clauses
and remains unsatisfiable. The two $m=7$ positive controls were satisfiable in
0.912 seconds (replacing $>$ by $\geq$) and 0.766 seconds (removing the trim).

The implementation of $\leq_{\mathrm{lex}}$ includes the all-equal case of
\eqref{eq:lex}. Forcing all seven columns equal is satisfiable under the
ordering constraints, while conjoining equality with the negation of
lexicographic order is unsatisfiable; duplicate chores are therefore not
discarded. At the clause level, column-permutation invariance has the identity
\begin{equation}
  \Bad_a(C\circ\pi) = \Bad_{a\circ\pi^{-1}}(C),
  \label{eq:permutation-equivariance}
\end{equation}
with the usual induced relabeling of columns and assignments. Syntactic
identity was checked for all $2187$ allocations and each of the six adjacent
transpositions. These transpositions generate $S_7$, so the set of $2187$
allocation clauses is invariant under all $5040$ chore permutations. The
column-order conjunct of \eqref{eq:phi-all} is of course not invariant; it is
what selects one representative per orbit, and Lemma~\ref{lem:column-sort} is
what makes that selection sound. The \texttt{sort-desc} decision of
Table~\ref{tab:corroborating-runs} checks the result under a different choice
of representative. Finally, independent positive row rescalings preserved the
entire EFX allocation set on exact rational test matrices, and the
\texttt{posrows} decision removes the unit-row equations altogether. These
checks corroborate Lemmas~\ref{lem:row-scaling} and~\ref{lem:column-sort}; the
lemmas themselves supply the reason that the canonical constraints are without
loss of generality.

\subsection{Eight chores}
The campaign record gives 24 variables, $6561$ allocation clauses, 16 literal
positions per clause, and $6640$ assertions for each primary encoding. The
cross-checking encoder has positive-total rather than unit-sum rows, 24 literal
positions per clause because it retains $j=i$, and the same assertion count.

The harness \texttt{s6\_audit.py} compared the primary Z3 16-position
clause builder and the cross-checking 24-position builder with the separate
literal verifier on eight exact rational matrices and every one of the $6561$
allocations, for $8\cdot6561=52{,}488$ matrix-allocation pairs. It found zero
disagreements for either builder; it did not programmatically exercise the cvc5
builder. The same comparison detected all five deliberate mutations.

The harness also made $3360$ matrix/allocation checks of the eight-chore
zero-row construction, $78{,}732$ allocation comparisons under independent row
scaling, and $78{,}732$ under chore permutation. It constructed $1000$ ordinary
and $1000$ residual canonical representatives, including tied instances with
pairwise-disjoint argmin sets. Finally, on 30 exact shared-minimum instances it
found a seven-chore EFX partition and a matching-preserving reinsertion. Those
30 instances are empirical support for the implementation only; they are not a
proof of Lemma~\ref{lem:kms-insertion} or Lemma~\ref{lem:shared-minimum}, whose
justification is the cited Lemma~4.2 together with Observations~2.2 and~2.3, as
applied in Section~\ref{subsec:shared-minimum}.

\section{Supporting Occupancy Observations}
\label{app:occupancy}

This appendix is used in the proof of neither theorem. It concerns only seven
chores and records why a proof restricted to one balanced occupancy shape would
not cover all seven-chore instances. Throughout, $\Mset=\{0,\ldots,6\}$, and
for an agent $i$ we write $A_i=\argmin_{g\in\Mset}c_{ig}$ for the full set of
minimum-cost chores in row $i$.

\begin{lemma}[No empty bundle under disjoint argmins]
\label{lem:no-empty-argmins}
Suppose $A_0,A_1,A_2$ are pairwise disjoint. Then every EFX allocation has
three nonempty bundles.
\end{lemma}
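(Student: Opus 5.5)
We argue by contradiction. Suppose $X$ is an EFX allocation of the seven chores in which some bundle is empty, say $X_k=\varnothing$. The plan is to exhibit one agent whose EFX condition against the empty bundle $X_k$ must fail.

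\emph{Step 1: the empty-bundle constraint.} Every agent $i$ with $X_i\neq\varnothing$ satisfies, by the residual form \eqref{eq:residual} with $j=k$,
\[
  c_i(X_i)-\min_{g\in X_i}c_{ig}\leq c_i(\varnothing)=0 .
\]
By nonnegativity this forces $c_{ih}=0$ for every $h\in X_i$ except possibly one cheapest chore. When $|X_i|\geq2$, it follows that $\min_{g\in X_i}c_{ig}=0$ as well, hence $c_i(X_i)=0$. In that case all chores of $X_i$ are zero-cost for $i$, so $X_i\subseteq A_i$ and $\min_g c_{ig}=0$.

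\emph{Step 2: counting.} Since $X_k$ is empty, the seven chores lie in at most two bundles, so some agent $i\neq k$ has $|X_i|\geq4\geq2$. By Step 1, $X_i\subseteq A_i$. Next consider the remaining agent $\ell\notin\{i,k\}$.
\begin{itemize}
\item If $|X_\ell|\geq2$, then likewise $X_\ell\subseteq A_\ell$. Thus $A_i\cup A_\ell$ contains all seven chores, including every chore of $A_k$ (which is nonempty). Therefore $A_k$ meets $A_i$ or $A_\ell$, contradicting pairwise disjointness.
\item If $|X_\ell|\leq1$, then $X_i$ has at least six chores. If $X_\ell=\varnothing$, then $X_i=\Mset\subseteq A_i$, so $A_i=\Mset$ meets $A_k$, a contradiction. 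Otherwise $X_i$ contains six chores, all in $A_i$, and the single chore of $X_\ell$ is the only chore possibly outside $A_i$. Since $A_k$ and $A_\ell$ are nonempty, disjoint from $A_i$, and disjoint from each other, they require two distinct chores outside $A_i$, but only one is available. This is a contradiction.
\end{itemize}

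\emph{Main obstacle.} The only delicate point is Step 1's derivation that $|X_i|\geq2$ implies all of $X_i$ is zero-cost. This relies on the trimmed chore being a minimum and on the zero-tolerant quantification over owned chores. The derivation should be written carefully via Remark~\ref{rem:residual}: the sum of the non-minimal entries of $X_i$ is $0$, and the minimal entry is at most any other entry, hence also $0$. Ties within a row are harmless because $A_i$ is the full argmin set. Beyond that point the argument is pure counting, and it would go through for any $m\geq3$.
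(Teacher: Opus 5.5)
Your proof is correct and is essentially the paper's argument. Both derive $X_i\subseteq A_i$ for any owner of at least two chores from the comparison with the empty bundle, and your split into $|X_\ell|\geq2$, $|X_\ell|=0$ and $|X_\ell|=1$ covers exactly the paper's occupancy cases $(5,2,0)$ and $(4,3,0)$, $(7,0,0)$, and $(6,1,0)$, with the same covering and counting contradictions (you go through the residual form $\tau_i$, where the paper quantifies over every trimmed chore directly).
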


\begin{proof}
Assume agent $e$ has $X_e=\varnothing$. If another agent $i$ owns at least two
chores, comparison with $X_e$ in \eqref{eq:efx} gives
$c_i(X_i\setminus\{g\})\leq0$ for every $g\in X_i$. Nonnegativity then forces
every chore in $X_i$ to have zero cost to $i$, so $X_i\subseteq A_i$.

For occupancy $(7,0,0)$, the seven-chore owner's argmin set is all of $\Mset$,
contradicting pairwise disjointness. For $(6,1,0)$, the six-chore bundle is
contained in its owner's argmin set $A_i$, so $\Mset\setminus A_i$ has at most
one element; disjointness puts the other two argmin sets inside
$\Mset\setminus A_i$, and both are nonempty, so they are equal and intersect.
For $(5,2,0)$ and $(4,3,0)$, both nonempty bundles lie in their owners' argmin
sets and together cover $\Mset$, so the empty agent's nonempty argmin set must
intersect one of them. These are all occupancy types with an empty bundle.
\qed
\end{proof}

All four positive sorted triples summing to seven can be necessary. Exact
enumeration of the four matrices stored in the project artifact as
\texttt{s5\_matrices.json} gives Table~\ref{tab:occupancy}. Each listed matrix has pairwise-disjoint full argmin
sets and has EFX allocations of only the displayed shape.

\begin{table}[htbp]
  \centering
  \caption{Stored matrices realizing each nonempty occupancy as the sole EFX
  shape. Counts are out of all $2187$ complete allocations.}
  \label{tab:occupancy}
  \begin{tabular}{@{}lcc@{}}
    \toprule
    Artifact identifier & Sole sorted occupancy & Number of EFX allocations \\
    \midrule
    \texttt{s5-only-322} & $(3,2,2)$ & 64 \\
    \texttt{s5-only-331} & $(3,3,1)$ & 21 \\
    \texttt{s5-only-421} & $(4,2,1)$ & 8 \\
    \texttt{s5-only-511} & $(5,1,1)$ & 6 \\
    \bottomrule
  \end{tabular}
\end{table}

The census was cross-checked by three exhaustive walks inside the same
campaign, including a self-contained occupancy checker and a separate verifier;
no walk was carried out independently of that campaign. It is supporting
evidence only: Theorem~\ref{thm:main} follows from the all-allocation formula
\eqref{eq:phi-all}, not from these four matrices or from an occupancy
classification. The appendix supplies no step in the eight-chore argument.

\section{Reproducing the Machine Decisions}
\label{app:reproducing}

The primary decision files are the four self-contained scripts
\texttt{s5\_all\_z3.py}, \texttt{s5\_all\_cvc5.py}, \texttt{s6\_all\_z3.py}
and \texttt{s6\_all\_cvc5.py}, accompanied by a pinned requirements file. They
live in the directory \texttt{artifacts/} of the artifact repository
\begin{center}
  \url{https://github.com/KaixxxZhang/Chores-EFX-n3m7or8}
\end{center}
which also carries the campaign record and the regression suite mentioned
below. Because the two
unsatisfiability decisions are the only machine premises of the two theorems,
Appendix~\ref{app:generators} reproduces verbatim, from the two Z3 scripts, the
whole of the formula construction on which they rest: a referee can audit the
predicate, the polarity and the canonical constraints from this paper alone,
without the repository. The
pins are \texttt{z3-solver} 5.1.0.0 (which reports Z3 5.1.0) and
\texttt{cvc5} 1.3.4; the reported runs used CPython 3.13.3. Every decision uses
exact rational arithmetic over SMT reals, not floating point and not a finite
cost domain. Solver heuristics and timings can nevertheless change between
releases. Representative invocations from the project root are
\begin{center}
  \small
  \begin{tabular}{@{}l@{}}
    \texttt{python artifacts/s5\_all\_z3.py --timeout 900} \\
    \texttt{python artifacts/s5\_all\_cvc5.py --timeout 900} \\
    \texttt{python artifacts/s6\_all\_z3.py --m 8 --timeout 7200
      \textbackslash} \\
    \qquad\texttt{--arith-solver 2 --residual-disjoint-argmins} \\
    \texttt{python artifacts/s6\_all\_cvc5.py --m 8 --timeout 7200
      \textbackslash} \\
    \qquad\texttt{--residual-disjoint-argmins}.
  \end{tabular}
\end{center}
The $s5$ scripts construct $\Phiall$; the displayed $s6$ options construct
$\Phires$. The flag \texttt{--residual-disjoint-argmins} is not on by default,
and omitting it runs the full class at $m=8$ rather than the residual formula
of Theorem~\ref{thm:eight}. The two remaining Z3 rows of
Table~\ref{tab:eight-runs} are the same residual formula under different Z3
strategies, reached by appending \texttt{--phase-selection 1} to the displayed
$s6$ Z3 invocation and by replacing \texttt{--arith-solver 2} with
\texttt{--arith-solver 6}.

Each script prints its solver version, logic, scope, domain, allocation and
assertion counts, result, and timings. A successful reproduction returns
\texttt{unsat}. An \texttt{unknown}, a timeout, or an error is no result and
neither confirms nor refutes anything; a \texttt{sat} result would refute the
corresponding machine proposition and theorem. The Z3 $s6$ script writes a
candidate model only when \texttt{--model-out} is supplied, and such a
candidate would still require separate certification, namely that both
exhaustive checkers report no EFX allocation for it. The \texttt{--timeout}
options are solver-internal soft limits, not hard wall-clock guards; the
reported UNSAT runs completed before their limits. The timings in
Tables~\ref{tab:runs}, \ref{tab:corroborating-runs}
and~\ref{tab:eight-runs} identify the reported runs; they are
hardware-dependent and are not complexity claims.

For the seven-chore campaign, the cross-checking encoder of
Table~\ref{tab:corroborating-runs} and its soundness harness were reported but
not archived. The later eight-chore campaign additionally archives
\texttt{s6\_residual\_referee.py}, which constructs the reported
cross-checking residual, and \texttt{s6\_audit.py}, which reproduces the finite
exact checks of Appendix~\ref{app:soundness}. They are invoked as
\begin{center}
  \small
  \begin{tabular}{@{}l@{}}
    \texttt{python artifacts/s6\_residual\_referee.py --m 8 --timeout 7200} \\
    \texttt{python artifacts/s6\_audit.py}.
  \end{tabular}
\end{center}
Archiving them does not turn the
cross-check into an independent audit. A repository regression suite was also
reported as $117$ passed and $2$ skipped; that result and the finite audit are
software checks, not premises of either theorem.

\section{The Formula Construction, Verbatim}
\label{app:generators}

The two unsatisfiability decisions are the only machine premises of this paper,
so this appendix reproduces the code that builds $\Phiall$ and $\Phires$. It lets
a referee audit the predicate of Definition~\ref{def:efx}, the polarity of
\eqref{eq:phi-all} and \eqref{eq:phi-res}, and the canonical constraints licensed
by Lemmas~\ref{lem:row-scaling}, \ref{lem:column-sort}
and~\ref{lem:shared-minimum}, without the repository named in
Appendix~\ref{app:reproducing}. The text is that of \texttt{s5\_all\_z3.py} and
\texttt{s6\_all\_z3.py} apart from the elisions noted below; the two cvc5 scripts
are separate transcriptions against the cvc5 API and are not reproduced here.
\texttt{And}, \texttt{Or}, \texttt{Real}, \texttt{SolverFor} and \texttt{Sum} are
imported from \texttt{z3}, and \texttt{product} and \texttt{combinations} from
\texttt{itertools}.

\subsection{Seven chores}
\label{subsec:code-seven}

\begin{scriptsize}
\begin{verbatim}
N = range(3)
M = range(7)

def z_or(terms):
    terms = list(terms)
    return Or(*terms) if terms else z3.BoolVal(False)

def lex_le(left, right):
    prefix = []
    cases = []
    for x, y in zip(left, right):
        cases.append(And(*(prefix + [x < y])))
        prefix.append(x == y)
    cases.append(And(*prefix))
    return Or(*cases)

def not_efx_clause(cost, allocation):
    bundles = [[g for g in M if allocation[g] == i] for i in N]
    violations = []
    for i in N:
        own = Sum([cost[i][g] for g in bundles[i]]) if bundles[i] else 0
        for j in N:
            if i == j:
                continue
            other = Sum([cost[i][g] for g in bundles[j]]) if bundles[j] else 0
            for g in bundles[i]:  # includes zero-cost owned chores
                violations.append(own - cost[i][g] > other)
    return z_or(violations)
\end{verbatim}
\end{scriptsize}

Four features of \texttt{not\_efx\_clause} are the ones to compare with
\eqref{eq:bad-clause}. Both sides of every comparison are evaluated in row
\texttt{i}, the possible envier's row. The trim \texttt{own - cost[i][g]} removes
a chore of the envier's own bundle. The loop over \texttt{bundles[i]} carries no
positivity guard, so an owned chore of zero cost still contributes a literal. And
the boundary is strict, so the clause is the exact negation of \eqref{eq:efx}
rather than of its non-strict variant. An agent with an empty bundle contributes
no literal, and \texttt{lex\_le} keeps the all-equal case of \eqref{eq:lex}, so
duplicate columns are not discarded.

The formula is assembled as follows, with argument parsing, timing and the JSON
report elided.

\begin{scriptsize}
\begin{verbatim}
    solver = SolverFor("QF_LRA")
    solver.set("timeout", args.timeout * 1000)
    cost = [[Real(f"c_{i}_{g}") for g in M] for i in N]

    for i in N:
        solver.add(*(cost[i][g] >= 0 for g in M))
        solver.add(Sum(cost[i]) == 1)

    # WLOG sort all seven normalized cost columns.
    columns = [[cost[i][g] for i in N] for g in M]
    for g in range(6):
        solver.add(lex_le(columns[g], columns[g + 1]))

    for allocation in product(N, repeat=7):
        solver.add(not_efx_clause(cost, allocation))
    result = solver.check()
\end{verbatim}
\end{scriptsize}

The final loop is the polarity claim of \eqref{eq:phi-all}: \texttt{solver.add}
is called once for every element of \texttt{product(N, repeat=7)}, so all $2187$
allocation clauses are asserted together, and the enumeration is exhaustive by
construction rather than by a stored list. The cost variables are of SMT sort
\texttt{Real} and are constrained only by \eqref{eq:normalization} and the column
order, so the domain is unbounded and continuous.

\subsection{Eight chores}
\label{subsec:code-eight}

The eight-chore script shares \texttt{z\_or} and \texttt{lex\_le} verbatim and
generalizes \texttt{not\_efx\_clause} to a parameter \texttt{m} and a
\texttt{variant} argument whose default \texttt{"efx"} is the predicate above;
the two other variants are the deliberately wrong boundary and the untrimmed
envy-freeness target that produce the \texttt{sat} controls of
Table~\ref{tab:eight-runs}. The residual constraints of \eqref{eq:phi-res} are
added by the following function, whose docstring and one argument guard are
elided.

\begin{scriptsize}
\begin{verbatim}
def add_disjoint_argmin_residual(solver, cost):
    m = len(cost[0])
    for i in N:
        solver.add(*(cost[i][i] <= cost[i][g] for g in range(m)))
    for p, q in combinations(N, 2):
        for g in range(m):
            solver.add(
                Or(
                    z_or(cost[p][h] < cost[p][g] for h in range(m) if h != g),
                    z_or(cost[q][h] < cost[q][g] for h in range(m) if h != g),
                )
            )
\end{verbatim}
\end{scriptsize}

The first loop pins a chosen weak minimum of row $i$ to chore $i$, which is the
third conjunct of \eqref{eq:phi-res}. The double loop is
\eqref{eq:disjoint-argmins} in the form used there: for every agent pair and
every chore, some other chore is strictly cheaper in at least one of the two
rows. Ties inside a single row are untouched. Finally, only the five unpinned
columns are sorted, the pinned ones being neither sorted among themselves nor
compared with the free ones.

\begin{scriptsize}
\begin{verbatim}
    if column_symmetry:
        columns = [[cost[i][g] for i in N] for g in range(m)]
        first_interchangeable = len(N) if residual_disjoint_argmins else 0
        for g in range(first_interchangeable, m - 1):
            solver.add(lex_le(columns[g], columns[g + 1]))
\end{verbatim}
\end{scriptsize}

Lemma~\ref{lem:shared-minimum} is what makes these three restrictions sound.
They are not solver heuristics: a decision of this formula settles only the
pairwise-disjoint-argmin class, and the complement of that class is discharged by
the lifting argument of Section~\ref{subsec:shared-minimum}, not by a solver.

\section{Tool and Computational Resource Disclosure}
\label{app:tools}

This appendix follows the disclosure recommendation for machine-assisted
mathematics of \cite{LeidenDeclaration2026}: it states who is accountable for the
paper, which component each tool produced, and which checks have not been carried
out.

\subsection{Accountability, and what the author verified}
\label{subsec:accountability}

The author is accountable for every statement above. No language model is an
author of this paper, and none is cited as a mathematical authority. Within that
frame the disclosure is this: the mathematics, the software, the solver encodings
and the first drafts of this manuscript were drafted by large language models
working under staged directives written by the author, and were then checked to
the extent itemized next.

The author has read and checked the hand mathematics on which both theorems rest:
Definition~\ref{def:efx} and the residual form of Remark~\ref{rem:residual}; the
canonicalization lemmas of Section~\ref{subsec:canonicalization}; the two exact
reductions of Propositions~\ref{prop:exact-reduction}
and~\ref{prop:eight-exact-reduction}, including the polarity of the clause
conjunction, which is where an UNSAT-based argument of this shape fails if it
fails; the statement and the hypotheses of the cited insertion
Lemma~\ref{lem:kms-insertion}, checked against the source cited in
Appendix~\ref{app:versions}; and the lifting argument of
Lemma~\ref{lem:shared-minimum}. That material is short by design, and
Sections~\ref{sec:preliminaries}--\ref{sec:eight} present it so that a referee can
repeat the same check. The author did not audit the source code of the generators
or of the auxiliary harnesses, which is why the construction of both formulas is
reproduced verbatim in Appendix~\ref{app:generators} instead of being described.

The author also reran the four decisions of Appendix~\ref{app:reproducing} in
September 2026, on the same 14-core Apple M4 Pro, under macOS 26.6.2, at the
pinned solver versions. All four returned \texttt{unsat} again, both cvc5 runs again checking
their proofs internally, at check times of $485.669$, $399.756$, $606.183$ and
$3156.109$ seconds in the row order of Table~\ref{tab:runs}: the same four
configurations, and the residual formula again at $6640$ assertions. Those times
are uniformly below the recorded ones, which is run-to-run and system-state
variation on one machine rather than a measurement of the formulas.
Table~\ref{tab:runs} still reports the original
runs. These reruns execute the archived scripts, so they confirm the recorded
verdicts and the pinned environment rather than adding an independent encoding.

\subsection{Which components were machine-produced}
\label{subsec:machine-produced}

All language models named below were used as agents inside the Cursor IDE, and are
named here as that interface named them at the time of use; no vendor tier or
reasoning-effort setting is implied by those names.

Three models were used in fixed roles for the seven-chore work. GPT-5.6 Sol
wrote the predicate contract that fixes Definition~\ref{def:efx}; in a later
session, the QF\_LRA existence encoding of Section~\ref{subsec:encoding} and its
two primary generators; and, in a third session, the first draft of the
seven-chore manuscript. Claude Opus 5 wrote an independent exhaustive certifier,
rebuilt from the specification alone and sharing no code with the primary
implementation, and then acted as a hostile referee in three separate sessions;
the cross-checking encoder whose decisions appear in
Table~\ref{tab:corroborating-runs}, and the clause comparison and mutation
battery of Section~\ref{sec:soundness}, were written in the last of those
sessions, in files that it kept outside the project repository, which is why
those decisions are reported here and that encoder is not part of the artifact.
Grok 4.6 wrote the exhaustive enumerator, the regression suite, a bounded
integer-domain counterexample search whose unsatisfiability results concern
finite integer cost domains and are not evidence for Theorem~\ref{thm:main}, and
the occupancy census behind Appendix~\ref{app:occupancy}. Two further
mathematical probes returned sufficient conditions on cost subclasses; none of
them is a lemma of this paper.

The eight-chore work is a later and separate campaign, run by agents of the same
GPT family, which produced the lifting argument of
Section~\ref{subsec:shared-minimum}, the residual generators, the cross-checking
encoder and the audit harness. Adversarial reviews in fresh sessions of Claude
Opus 5, GPT-5.6 Sol and Kimi K3, separate from the refereeing sessions above, were
then used twice as an internal check, first on the seven-chore manuscript and then
on the eight-chore repository, and prompted the
correction of the identical-ordering discussion following
Lemma~\ref{lem:kms-insertion}. Those reviews are machine output rather than peer
review, and none of them reproduced a solver run.

Z3 and cvc5 are decision procedures, not language models. Their role is confined
to the two QF\_LRA decisions of Propositions~\ref{prop:machine-unsat}
and~\ref{prop:eight-machine-unsat}, over unbounded SMT reals with exact rational
arithmetic, and the boundary between what they decide and what is proved by hand
is drawn in Sections~\ref{sec:seven} and~\ref{sec:eight}.

\subsection{What has not been verified}
\label{subsec:not-verified}

The following limitations of the machine record are stated rather than left to be
inferred.
\begin{enumerate}
  \item Every status string and timing in Tables~\ref{tab:runs},
    \ref{tab:corroborating-runs} and~\ref{tab:eight-runs} reproduces the record
    of the run that produced it. The reruns reported in
    Appendix~\ref{subsec:accountability} cover the four rows of
    Table~\ref{tab:runs} only; the corroborating and control runs of the other
    two tables have not been repeated.
  \item cvc5's \texttt{check-proofs} option is a self-check inside cvc5. No proof
    object was exported to an independent kernel, no unsatisfiable core or Farkas
    combination is reported, and neither encoding is formally verified. This is
    weaker than the Lean-verified encoding of \cite{AkramiEtAl2026}.
\end{enumerate}
A reader who wishes to accept Theorems~\ref{thm:main} and~\ref{thm:eight} should
check the reductions of Sections~\ref{sec:seven} and~\ref{sec:eight} against
Appendix~\ref{app:generators}, and rerun the decisions of
Appendix~\ref{app:reproducing} independently.

\section{Note on the Cited Versions}
\label{app:versions}

Numbered statements of Kobayashi, Mahara, and Sakamoto are cited by the
numbering of the full version, arXiv:2305.04168
\cite{KobayashiMaharaSakamotoArXiv}: Observation~2.2, Observation~2.3,
Lemma~2.4, Corollary~2.5 and Lemma~4.2, and every citation that carries one of
those numbers points at that entry, so a locator always resolves in the version
it is attached to. The version of record is the journal article
\cite{KobayashiMaharaSakamoto2025}, whose preliminary version appeared at SAGT
2023, and the attributions that name no statement cite it. For the same reason,
the detailed $m\leq2n$ material of
Garg, Murhekar, and Qin discussed in Section~\ref{subsec:related} is in the
full version arXiv:2407.03318 of \cite{GargMurhekarQin2025}, and the
identical-ordering EFX result of Li, Li, and Wu is in the full version
arXiv:2103.11849 of \cite{LiLiWu2022}.

\end{document}